\newenvironment{appendix-lemma}[1]{\vspace{0.1in}\noindent{\bf Lemma~#1~} \em }{\vspace{0.1in}}
\newenvironment{appendix-claim}[1]{\vspace{0.1in}\noindent{\bf Claim~#1~} \em }{\vspace{0.1in}}
\newenvironment{appendix-theorem}[1]{\vspace{0.1in}\noindent{\bf Theorem~#1~} \em }{\vspace{0.1in}}
\newcommand{\NP}{\sf NP}
\newcommand{\Poly}{\sf P}
\newcommand{\CL}{\mbox{{\sf CL}}}
\newcommand{\FPT}{\mbox{{\sf FPT}}}
\newcommand{\MAXEFM}{\mbox{{\sf MAXEFM}}}
\newcommand{\MAXRSM}{\mbox{{\sf MAXRSM}}}
\newcommand{\HR}{\sf HR}
\newcommand{\SM}{\sf SM}
\newcommand{\HRLQ}{\sf HRLQ}
\newcommand{\SMLQ}{\sf SMLQ}
\newcommand{\MANYONE}{\mbox{{\sf MANY}-{\sf ONE}}}
\newcommand{\ONEONE}{\mbox{{\sf ONE}-{\sf ONE}}}
\newcommand{\MANYONELQ}{\mbox{{\sf MANY}-{\sf ONE}-{\sf LQ}}}
\newcommand{\ONEONELQ}{\mbox{{\sf ONE}-{\sf ONE}-{\sf LQ}}}
\newcommand{\LQ}{\mbox{\sf LQ}}
\newcommand{\IS}{\mbox{{\sf IND}\text{-}{\sf SET}}}
\newcommand{\A}{\mathcal{A}}
\newcommand{\B}{\mathcal{B}}
\newcommand{\EE}{\mathcal{E}}
\newcommand{\alg}{{\sf ALG}}
\newcommand{\ALG}{{\sf EXTEND}}
\begin{document}
\title{Envy-freeness and Relaxed Stability for Lower-Quotas : A Parameterized Perspective}
\titlerunning{Envy-freeness and Relaxed Stability for LQs : A Parameterized Perspective}
%
\author{Girija Limaye} 
\authorrunning{G. Limaye}
%
\institute{Indian Institute of Technology Madras, India\\
\email{girija@cse.iitm.ac.in}{}{}
}

\maketitle              
\begin{abstract}
	We consider the problem of assigning agents to resources under the two-sided preference list model
	where resources specify an upper-quota and a lower-quota, that is, respectively
	the maximum and minimum number of agents that can be assigned to it.
	Different notions of optimality including {\em envy-freeness} and {\em relaxed stability}
	are investigated for this setting and the goal is to compute a largest
	{\em optimal} matching.
	Krishnaa~et~al.~\cite{DBLP:SAGT20} show that in this setting,
	the problem of computing a maximum size envy-free matching ($\MAXEFM$)
	or a maximum size relaxed stable matching ($\MAXRSM$)
	is not approximable within a certain constant factor unless $\Poly = \NP$.
	This work is the first investigation of parameterized complexity of $\MAXEFM$ and $\MAXRSM$.
	We show that $\MAXEFM$ is $W[1]$-hard and 
	$\MAXRSM$ is para-$\NP$-hard when parameterized on several natural parameters derived from the instance.
	We present kernelization results and $\FPT$ algorithms for both problems 
	parameterized on other relevant parameters.

\keywords{Matchings under two-sided preferences \and Lower-quota \and Envy-freeness \and Relaxed stability \and Parameterized Complexity \and Kernelization}
\end{abstract}

\section{Introduction}\label{sec:intro}

Computing optimal matchings under the two-sided preference list setting is a well-studied problem.
In this setting, we are given a set of agents and a set of resources
such that
every agent and each resource {\em ranks} a subset from the other side 
and each resource has a positive upper-quota associated with it.
A {\em matching} in this setting is an allocation such that 
an agent is assigned to at most one resource and a resource is assigned to at most
upper-quota many agents.
Several practical applications naturally impose a {\em demand} on the number of 
agents matched to a resource. For instance, 
a hospital (that is, a resource) may require a minimum number of resident doctors (that is, agents)
to operate smoothly.
In such applications, a resource specifies a {\em lower-quota} in addition
to the upper-quota~\cite{DBLP:conf/wine/BoehmerH20, Yokoi20, HIM16, DBLP:SAGT20}.
A matching is {\em feasible} if every resource is matched to the number of agents
that is at least its lower-quota.

Formally, we model this as a bipartite graph $G = (V = \A \cup \B, E)$
where the bipartition consists of the set of agents ($\A$) and the of resources ($\B$).
Let $n = |V|, m = |E|$.
An edge $(a,b) \in E$ indicates that $a \in \A$ and $b \in \B$ are mutually acceptable.
Each vertex $v \in \A \cup \B$ ranks its neighbours in a strict order, denoted as the {\em preference list} of $v$.
If vertex $u$ prefers $v$ over $w$, we denote it by $v \succ_u w$.
Every resource $b$ has a positive upper-quota $q^+(b)$ and a non-negative
lower-quota $q^-(b)$
associated with it such that $q^-(b) \leq q^+(b)$.
If $q^-(b) > 0$, we denote the resource $b$ as an $\LQ$ resource otherwise it is a non-$\LQ$ resource.

A $\MANYONELQ$ instance $G$ is a bipartite graph and the associated quotas and preference lists.
$\ONEONELQ$ is a special case of $\MANYONELQ$ wherein every resource has an upper-quota of 1
(and therefore, a lower-quota of 0 or 1).
Given a $\MANYONELQ$ instance, a standard technique of {\em cloning}~\cite{GI89, DBLP:journals/algorithmica/MnichS20} can be used
to get the corresponding $\ONEONELQ$ instance.
If no resource is an $\LQ$ resource, then we denote the instance as $\MANYONE$ or $\ONEONE$ 
(depending on the upper-quota values).
In literature, $\MANYONE$ and $\ONEONE$ instances are respectively known as the Hospital-Residents $(\HR$) instance and the 
Stable Marriage ($\SM$) instance~\cite{GI89, GS62}.
When lower-quotas are present, these instances are known as $\HRLQ$ and $\SMLQ$ respectively~\cite{DBLP:SAGT20, Yokoi20}.

In this work, we present our results for the $\ONEONELQ$ setting.
(See Section~\ref{sec:contrib} for the details.)
A matching $M$ in this setting is a subset of edges representing an allocation of agents to resources
such that an agent is matched to at most one resource
and a resource is matched to at most one agent.
Note that an agent is matched to a resource if and only if the resource is matched to the agent.
Let $G$ be a $\ONEONELQ$ instance and $M$ be a matching in $G$.
Then $M(a)$ denotes the resource matched to agent $a$ and $M(b)$ denotes the agent matched to resource $b$.
If a vertex $v \in \A \cup \B$ 
is unmatched in $M$ then we denote it as $M(v) = \bot$.
A vertex prefers being matched to one of its neighbours over being unmatched, that is, $\bot$ is the least-preferred
choice for every vertex $v$.

An $\LQ$ resource $b$ is {\em deficient} if $b$ is unmatched in $M$.
A matching $M$ is {\em feasible} if no resource is deficient in $M$, otherwise it is {\em infeasible}.
In a $\ONEONE$ instance, all matchings are feasible.
We assume that the given $\ONEONELQ$ instance admits a feasible matching.
The notion of matching does not consider the preferences of agents and resources.
An instance may admit several feasible matchings and the goal is to compute a feasible matching
that is optimal with respect to the preferences.

\noindent{\bf Optimal feasible matchings. }
Stability is the de-facto notion of optimality for the $\ONEONE$ setting and is defined as follows.
\begin{definition}[Stability in the $\ONEONE$ setting]
	Let $M$ be a matching. A pair $(a, b) \in E \setminus M$ is a blocking pair w.r.t. $M$ if $b \succ_a M(a)$ and $a \succ_b M(b)$. A matching $M$ is {\em stable} if there is no blocking pair w.r.t. $M$.
\end{definition}

Recall that in the above definition, an unmatched vertex $v$ has $M(v) = \bot$ and it is the least-preferred choice for $v$.
A $\ONEONE$ instance always admits a stable matching and it can be efficiently computed by the well-known Gale and Shapley algorithm~\cite{GS62,GI89}.
However, a $\ONEONELQ$ instance may not admit a stable matching that is simultaneously feasible.
We recall the instance presented by Krishnaa~et~al.~\cite{DBLP:SAGT20} as Fig.~~\ref{fig:no_stb_feas}.
The instance in Fig.~\ref{fig:no_stb_feas} admits two feasible matchings: $M_1 = \{(a_1, b_2)\}$ and $M_2 = \{(a_1, b_2), (a_2, b_1)\}$ but neither of them is stable since $(a_1, b_1)$ forms a blocking pair.

\begin{figure}[!ht]
\begin{center}
    \begin{minipage}{0.2\textwidth}
    \begin{align*}
        a_1 &: b_1, b_2\\
        a_2 &: b_1
    \end{align*}
    \end{minipage}%
    \begin{minipage}{0.2\textwidth}
    \begin{align*}
        \text{[0,1]}\text{ } {b_1} &: a_1, a_2\\
        \text{[1,1]}\text{ } {b_2} &: a_1
    \end{align*}
    \end{minipage}%
	\caption{A $\ONEONELQ$ instance with two agents $\{a_1,a_2\}$ and two resources $\{b_1,b_2\}$. Quotas are denoted as [$q^-(b)$, $q^+(b)$] pair and the comma-separated list denotes the preference list of the corresponding vertex.
	}
\label{fig:no_stb_feas}
\end{center}
\end{figure}

When the stable matching $M$ is not feasible, there is at least one $\LQ$ resource that is deficient in $M$.
By the Rural Hospitals Theorem~\cite{Roth86}, the same set of agents and resources are matched in every stable matching.
Therefore, if an $\LQ$ resource is unmatched in a stable matching then it is unmatched (and hence deficient) 
in all stable matchings.
Using this fact, the {\em deficiency} of the instance is defined as follows.

\begin{definition}[Deficiency]\label{def:deficiency}
	Let $G$ be a $\ONEONELQ$ instance and $M$ be a stable matching in $G$. Then deficiency of $G$ is the number of $\LQ$ resources that are deficient in $M$.
\end{definition}

Thus, the deficiency is invariant of the stable matching.
Given that a $\ONEONELQ$ instance may not admit a stable, feasible matching,
several optimality notions~\cite{FITUY15, Yokoi20, NN17, DBLP:SAGT20, HIM16}
are considered.
Envy-freeness~\cite{Yokoi20, FITUY15, DBLP:SAGT20}, a relaxation of stability is defined as follows.

\begin{definition}[Envy-freeness]
Let $M$ be a matching. An agent $a$ has {\em justified envy} (here onwards called envy) towards a matched agent $a'$, where $M(a') = b$ and $(a,b) \in E$ if $b \succ_a M(a)$ and $a \succ_b a'$.
The pair $(a,a')$ is an envy pair w.r.t. $M$.  A matching $M$ is envy-free if there is no envy pair w.r.t. $M$.
\end{definition}

A $\ONEONELQ$ instance that fails to admit a stable, feasible matching may admit a feasible,
envy-free matching.
However, there exist simple $\ONEONELQ$ instances that do not admit a feasible, envy-free matching.
As noted in~\cite{DBLP:SAGT20}, for the instance shown in Fig.~\ref{fig:no_stb_feas}, the matching $M_1$ is the unique feasible, envy-free matching.
However, if $q^-(b_1) = q^-(b_2) = 1$ then the unique feasible matching $M_2$ is not envy-free because $a_1$ envies $a_2$.
Yokoi~\cite{Yokoi20} gives a characterization for the $\MANYONELQ$ instances that admit a feasible,
envy-free matching.

Krishnaa~et~al.~\cite{DBLP:SAGT20} propose a new notion of optimality, namely
relaxed stability~\cite{DBLP:SAGT20}.
Relaxed stability in the $\MANYONELQ$ instance is defined as follows.
In the $\MANYONELQ$ setting, let $M(b)$ denote the {\em set} of agents matched to resource $b$.
\begin{definition}[Relaxed stability in the $\MANYONELQ$ setting]
	Matching $M$ is {\em relaxed stable} if, for every resource $b$, at most $q^-(b)$ agents from $M(b)$ participate in blocking pairs and no unmatched agent participates in a blocking pair.
\end{definition}
In a $\ONEONELQ$ instance, the definition implies that if $b$ is a non-$\LQ$ resource then the agent $M(b)$ (if exists)
does not participate in a blocking pair and no unmatched agent participates in a blocking pair.
Equivalently, relaxed stability in the $\ONEONELQ$ setting is defined as follows.
\begin{definition}[Relaxed stability in the $\ONEONELQ$ setting]
	Matching $M$ is {\em relaxed stable} if, for every agent $a$ that forms a blocking pair, $a$ must be matched and (the resource) $M(a)$ is an $\LQ$ resource.
\end{definition}

\noindent 
For the instance shown in Fig.~\ref{fig:no_stb_feas}, if both resources have lower-quota of $1$,
then the unique feasible matching $M_2$ is relaxed stable, although it is neither stable nor envy-free.
In fact, every $\MANYONELQ$ instance always admits
a feasible, relaxed stable matching~\cite{DBLP:SAGT20}.
It is easy to observe that a stable matching is envy-free and relaxed stable.

\noindent{\bf Size of an optimal, feasible matching. }
The size of a matching plays an important role in practice.
A $\MANYONELQ$ instance may admit several feasible, envy-free matchings (when exists) or several
feasible, relaxed stable matchings of different sizes~\cite{DBLP:SAGT20}.
Krishnaa~et~al.~\cite{DBLP:SAGT20} show that
the size of a feasible, envy-free matching is at most the 
size of a stable matching (computed in the corresponding $\MANYONE$ instance by ignoring lower-quotas)
however a feasible, relaxed stable matching can even be larger than a stable matching
(even when a stable matching is feasible).
For the instance shown in Fig.~\ref{fig:no_stb_feas}, if $q^-(b_1) = 1$ and $q^-(b_2) = 0$ 
then the stable matching $M_s = \{(a_1, b_1)\}$ is feasible but matching $M_2$ is feasible, 
relaxed stable and is larger than the stable matching.
Therefore, it is natural to investigate the complexity of computing large
optimal, feasible matchings.

\noindent{\bf Complexity of computing a largest optimal, feasible matching. }
A maximum size feasible, stable matching (when exists) is efficiently computable in a $\MANYONELQ$ instance as follows --
first compute a stable matching $M$ in the corresponding $\MANYONE$ instance (by ignoring lower-quotas) 
using the Gale and Shapley Algorithm~\cite{GS62}.
Rural Hospitals Theorem~\cite{Roth86} implies that every resource $b$ is matched to the same {\em extend}
in every stable matching, that is, $|M(b)|$ is invariant of the stable matching $M$.
Thus, all stable matchings have the same size and either all are feasible or none of them is.

Krishnaa et al.~\cite{DBLP:SAGT20} investigate computational complexity of computing a maximum size feasible,
envy-free matching, when exists ($\MAXEFM$) and that of computing a maximum size
feasible, relaxed-stable matching ($\MAXRSM$) in a $\MANYONELQ$ instance.
They show that both these problems are $\NP$-hard and cannot
be approximated within $\frac{21}{19}-\epsilon, \epsilon>0$ unless $\Poly = \NP$, even under $\ONEONELQ$ setting
and other severe restrictions.
A polynomial time algorithm for a special case ($\CL$-restriction~\cite{DBLP:SAGT20, HIM16}) of $\MANYONELQ$ setting
and an approximation algorithm for the general case of the $\MAXEFM$ problem 
are presented in~\cite{DBLP:SAGT20} but the approximation guarantee is weak.
They also present an instance (Fig.~2~\cite{DBLP:SAGT20}) that admits two feasible, envy-free matchings 
such that the size difference is $O(|\A|)$.
Recall that unlike feasible, envy-free matchings,
a feasible, relaxed stable matching always exists in a $\MANYONELQ$ instance
and it could be larger than a stable matching.
This motivates the investigation of parameterized complexity of $\MAXEFM$ and $\MAXRSM$.
The following results from~\cite{DBLP:SAGT20} are used in this work.
\begin{lemma}[Lemma~2~\cite{DBLP:SAGT20}]\label{lem:stbl_feas}
	A stable matching, when feasible, is an optimal solution of $\MAXEFM$.
\end{lemma}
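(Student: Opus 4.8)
The plan is to establish two things: that a feasible stable matching $M$ is itself a feasible, envy-free matching (so it is a legal candidate for $\MAXEFM$), and that no feasible, envy-free matching can be strictly larger than it. The first part is immediate: a stable matching has no blocking pair, and if an agent $a$ envied $a'$ with $M(a') = b$ then $(a,b)$ would be a blocking pair, so every stable matching is envy-free; together with the hypothesis that $M$ is feasible, $M$ lies in the feasible, envy-free set. The whole content of the lemma is therefore the size inequality, and here I would reduce to the following: in the underlying $\ONEONE$ instance obtained by ignoring lower-quotas, every envy-free matching $N$ satisfies $|N| \le |M|$ for a stable matching $M$. Note envy-freeness does not refer to the quotas at all, so this is purely a statement about the $\ONEONE$ instance; moreover, by the Rural Hospitals Theorem all stable matchings have the same size, so it suffices to fix one stable $M$.

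For the inequality I would argue by contradiction via the symmetric difference $M \oplus N$. If $|N| > |M|$ then some connected component of $M \oplus N$ is an $M$-augmenting path $P = a_0 - b_0 - a_1 - b_1 - \cdots - a_\ell - b_\ell$, with $N$-edges $(a_i, b_i)$ and $M$-edges $(b_i, a_{i+1})$, whose endpoints $a_0$ and $b_\ell$ are unmatched in $M$ (and matched in $N$); by bipartiteness and odd length one endpoint is an agent and one a resource, and we may relabel so that $a_0$ is the agent endpoint, forcing $b_\ell$ to be a resource. Walking along $P$ I would maintain two alternating invariants: from ``$(a_i, b_i)$ does not block $M$'' together with $b_i \succ_{a_i} M(a_i)$ we obtain $a_{i+1} \succ_{b_i} a_i$ (resource $b_i$ prefers its $M$-partner to its $N$-partner); and then from ``$a_i$ does not envy $N(b_{i-1}) = a_{i-1}$ under $N$'' together with $a_i \succ_{b_{i-1}} a_{i-1}$ we obtain $N(a_i) = b_i \succ_{a_i} b_{i-1} = M(a_i)$. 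The base case $i = 0$ uses only $M(a_0) = \bot$. Pushing the induction to the end of $P$ yields $b_\ell \succ_{a_\ell} M(a_\ell)$, and since $b_\ell$ is unmatched in $M$ the pair $(a_\ell, b_\ell)$ blocks $M$, contradicting stability (the degenerate case $\ell = 0$ gives the blocking pair $(a_0, b_0)$ directly). Hence $|N| \le |M|$.

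Putting the pieces together, a feasible stable matching $M$ is feasible and envy-free and is at least as large as every feasible, envy-free matching, so it is an optimum of $\MAXEFM$. The only delicate point is the bookkeeping in the augmenting-path induction — in particular getting the orientation of $P$ right so that the far endpoint is a resource, which is exactly what lets the terminal step produce a blocking pair of $M$ rather than merely an envy pair of $N$; the rest is routine. Alternatively, since the discussion preceding this lemma already records that the size of a feasible, envy-free matching is at most that of a stable matching, one may invoke that fact directly and omit the second paragraph.
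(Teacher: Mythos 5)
Your proposal is correct, and it is worth noting that the paper itself offers no proof of this statement: it is imported verbatim as Lemma~2 of Krishnaa et al.~\cite{DBLP:SAGT20}, together with the companion fact (Proposition~\ref{prop:p1}) extracted from that proof, namely that an agent unmatched in a stable matching is unmatched in every envy-free matching. The route implicit in the paper is therefore: stable $\Rightarrow$ envy-free (your first paragraph, identical), and Proposition~\ref{prop:p1} $\Rightarrow$ every feasible, envy-free matching matches a subset of the agents matched in a stable matching, hence has size at most $s$. Your second paragraph replaces that proposition by a direct symmetric-difference argument: an $M$-augmenting path in $M \oplus N$ is walked with the two alternating invariants ($a_{i+1} \succ_{b_i} a_i$ from stability of $M$, and $b_i \succ_{a_i} b_{i-1}$ from envy-freeness of $N$), terminating in a blocking pair $(a_\ell,b_\ell)$ of $M$ because the far endpoint $b_\ell$ is an $M$-unmatched resource; I checked the base case, the inductive steps, the orientation of the path, and the degenerate case $\ell=0$, and they are all sound. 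What the paper's (cited) route buys is slightly more: Proposition~\ref{prop:p1} is a statement about \emph{which} agents can be matched, not just how many, and the paper reuses it separately in the kernelization (Section~\ref{sec:efmkernel}); your path argument yields only the cardinality bound, though essentially the same alternating-path machinery would also prove the proposition if started at an agent unmatched in $M$ but matched in $N$. One small caveat: your proof is phrased for the $\ONEONELQ$ setting, whereas the lemma in~\cite{DBLP:SAGT20} is stated for $\MANYONELQ$; this matches how the present paper states its results and applies the lemma (to the $\ONEONE$ instance $G'$ in Algorithm $\ALG$), and the general case follows by cloning or by adapting the path argument, but if you intend to prove the lemma at the generality of the citation you should say so.
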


\begin{proposition}[from the proof of Lemma~2~\cite{DBLP:SAGT20}]\label{prop:p1}
	An agent unmatched in a stable matching is also unmatched in every envy-free matching.
\end{proposition}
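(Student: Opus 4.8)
Here is how I would prove Proposition~\ref{prop:p1}.

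\smallskip
\noindent\textbf{Proof proposal.}
Fix a stable matching $M_s$ of the instance (computed in the corresponding $\ONEONELQ$ graph while ignoring lower-quotas; by the Rural Hospitals Theorem the set of agents it matches is invariant, so ``unmatched in a stable matching'' is unambiguous). Suppose for contradiction that some agent $a$ with $M_s(a)=\bot$ is matched in an envy-free matching $M$, say $M(a)=b_1$. The plan is to trace the connected component $P$ of the symmetric difference $M\triangle M_s$ containing $a$ and show that it can never end, contradicting the finiteness of $G$. Note that this argument uses only that $M$ is envy-free, not that it is feasible.

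Since $a$ is matched in $M$ only, it has degree $1$ in $M\triangle M_s$, so $P$ is an alternating path with $a$ as one endpoint, and its first edge is $(a,b_1)\in M$. Hence $P$ reads $a=a_0',\,b_1,\,a_1',\,b_2,\,a_2',\dots$, where for the internal vertices $M(b_i)=a_{i-1}'$, $M_s(b_i)=a_i'$, $M_s(a_i')=b_i$ and $M(a_i')=b_{i+1}$; in particular every resource on $P$ is reached by an $M$-edge and is therefore matched in $M$, while every agent on $P$ other than $a$ is reached by an $M_s$-edge and is therefore matched in $M_s$. I would establish, by induction on the distance from $a$ along $P$, the two invariants: (A) for every $M$-edge $(a_i',b_{i+1})$ on $P$, agent $a_i'$ strictly prefers $b_{i+1}$ to its $M_s$-partner $b_i$ (with the convention $M_s(a)=\bot$); and (B) for every $M_s$-edge $(b_i,a_i')$ on $P$, resource $b_i$ strictly prefers $a_i'$ to its $M$-partner $a_{i-1}'$.

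The crucial point is that the two halves of the induction are powered by the two different optimality notions, and that the \emph{same} two arguments also show $P$ cannot terminate except at $a$. To obtain (B) at a resource $b_i$: if it failed, then $b_i$ would prefer its $M$-partner $a_{i-1}'$ to $M_s(b_i)=a_i'$, while by (A) the agent $a_{i-1}'$ prefers $b_i$ to its $M_s$-partner; hence $(a_{i-1}',b_i)$ would block $M_s$, contradicting stability. To obtain (A) at an agent $a_i'$: if it failed, then $a_i'$ would prefer $b_i$ to its $M$-partner $b_{i+1}$, while by (B) the resource $b_i$ prefers $a_i'$ to $M(b_i)=a_{i-1}'$; hence $a_i'$ would have justified envy towards $a_{i-1}'$ under $M$, contradicting envy-freeness. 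If $P$ stopped at a resource $b_i$ (necessarily unmatched in $M_s$), invariant (A) at the preceding $M$-edge again yields a blocking pair of $M_s$; if $P$ stopped at an agent $a_i'\neq a$ (necessarily unmatched in $M$), invariant (B) at the preceding $M_s$-edge again yields an envy pair of $M$. The base case is immediate: for the edge $(a,b_1)$ invariant (A) just says $b_1\succ_a\bot$, and if $P$ had length one then $b_1$ is unmatched in $M_s$, so $(a,b_1)$ blocks $M_s$.

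Therefore $P$ has no endpoint other than $a$, so it is infinite or a cycle; it is not a cycle since $a$ is an endpoint, and it cannot be infinite since $G$ is finite --- a contradiction, which proves the proposition. (The general $\MANYONELQ$ statement then follows from the $\ONEONELQ$ case via standard cloning.) I expect the only delicate part to be the bookkeeping of the induction: keeping straight that stability is the obstruction invoked at resources while envy-freeness is the one invoked at agents, and cleanly merging the ``path continues'' step and the ``path cannot terminate here'' step, together with the degenerate boundary cases around the endpoint $a$ with $M_s(a)=\bot$.
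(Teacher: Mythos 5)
Your proof is correct: the alternating-path invariants along $M\triangle M_s$ (stability powering the step at resources, envy-freeness the step at agents, and the same two arguments excluding both possible terminations of the path) are sound, and you rightly observe that feasibility of $M$ is never needed. The paper does not reprove this proposition but imports it from the proof of Lemma~2 in~\cite{DBLP:SAGT20}, where essentially the same propagation/chasing argument is used (compare also the sequence-chasing in Claim~\ref{cl:cllem} of this paper), so your route matches the intended one.
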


\subsection{Our contributions}\label{sec:contrib}
This is the first work that investigates parameterized
complexity of computing largest feasible matchings that are either envy-free or 
relaxed stable in the lower-quota setting.
Our kernelization algorithms give
interesting insights into the structure of desired output matching.
Our $\FPT$ algorithms are simple to implement and their correctness proofs 
crucially use the fact (proved in this work)
that a {\em minimal feasible} optimal matching can be {\em extended} to a largest size one.

\noindent{\bf Parameters. }
$\MAXEFM$ is polynomial time solvable when there are no lower-quota resources.
Hence, it is natural to consider the case when the number of $\LQ$ resources
is small. We capture this by the parameter $q$ that denotes the number of $\LQ$ resources.
$\MAXEFM$ is also polynomial time solvable when a stable matching is feasible~\cite{DBLP:SAGT20}.
Therefore, we consider the case when the deficiency of the instance ($d$) or 
the number of deficient resources ($n_d$) is small.
We note that for a $\ONEONELQ$ instance, $d = n_d$ and for a $\MANYONELQ$ instance, $n_d \leq d$.

Let $\overline{\A}$ denote the union of the sets of agents acceptable to the $\LQ$ resources.
When a stable matching is infeasible, it is likely that there are few agents in $\overline{\A}$ 
and $\LQ$ resources have short preference lists.
We capture this by parameters $|\overline{\A}|$ and the length
of preference list of an $\LQ$ resource ($\ell_{LQ}$) respectively.
Also in practice, agents typically submit short preference lists.
Similar parameters are considered by Mnich~et~al.~\cite{DBLP:journals/algorithmica/MnichS20} 
for the optimization problem studied in~\cite{HIM16} for the $\MANYONELQ$ setting.
In~\cite{HIM16}, Hamada~et~al. investigate the problem
of computing a feasible matching with minimum number of blocking
pairs or blocking agents.
Mnich~et~al.~\cite{DBLP:journals/algorithmica/MnichS20} consider the length of preference list
and number of agents with lower-quota as parameters 
and present their results for the $\ONEONELQ$ setting.

We observe that $\MAXRSM$ is $\NP$-hard even when a stable matching is feasible 
(see Section~\ref{apsec:rsmcorr}).
As mentioned earlier, relaxed stable matchings can be even larger than a stable matching, 
even when stable matching is feasible.
By definition, a stable matching is {\em maximal}.
Therefore, the size of a 
feasible, relaxed stable matching 
is at most twice the size of a stable matching in the instance.
We consider the size of a stable matching ($s$) as a parameter.
Similar parameterization derived from the size of a matching 
is investigated in~\cite{AGRSZ18, DBLP:conf/fsttcs/Gupta0R0Z20}.
In~\cite{AGRSZ18}, the problem of computing a largest
stable matching in the $\SM$ setting
and in the {\em stable roommates} setting
\footnote{Stable roommates~\cite{Irving85} is a generalization of $\SM$ setting to non-bipartite graphs.}
with ties and incomplete lists are investigated.
They show that these problems admit
a polynomial kernel in the solution size and the size of a maximum matching respectively.
Gupta~et~al.~\cite{DBLP:conf/fsttcs/Gupta0R0Z20} investigate the problem of computing matchings
larger than a stable matching
by allowing minimum number of blocking pairs, parameterized by
the size of the output matching beyond the size
of a stable matching.
We remark that these problems are maximization problems and 
our results are in the same spirit.

A necessary condition for a pair of agents to form an envy pair
is that they both rank at least one common resource.
Let $t$ denote the number of non-$\LQ$ resources that are common in
preference list of a pair of agents.
We consider $t$ as a parameter. 
We remark that feasibility requirement may force an agent to get 
matched to a lower-preferred resource than a fixed resource, say $b$, 
thereby forbidding several agents
from getting matched to $b$ in order to achieve envy-freeness.
The parameter $t$ captures this dependency.

Table~\ref{tab:stru_params} summarizes the parameters used in this work.
\begin{table}[!ht]
	\centering{
	\begin{tabular}{|p{0.1\textwidth}|p{0.88\textwidth}|}\hline
		$q$ & Number of $\LQ$ resources \\\hline
		$\ell_{LQ}$ & Length of a longest preference list of an $\LQ$ resource \\\hline
		$d$ & Deficiency (Def.~\ref{def:deficiency}) of the instance w.r.t. a stable matching \\\hline
		$n_d$ & Number of deficient resources w.r.t. a stable matching \\\hline
		$|\overline{\A}|$ & Cardinality of the set of (unique) agents acceptable to $\LQ$ resources \\\hline
		$t$ & Number of common non-$\LQ$ resources appearing in the preference lists of a pair of agents \\\hline
		$s$ & Size of a stable matching in the instance \\\hline
	\end{tabular}
	}
	\caption {Summary of parameters}
	\label{tab:stru_params}
\end{table}

\noindent{\bf Our results. }
Recall that $\MAXEFM$ and $\MAXRSM$ respectively denote the problem
of computing a maximum size feasible, envy-free matching (when exists) and the problem
of computing a maximum size feasible, relaxed-stable matching in a $\MANYONELQ$ instance.
We present the following new results.
We show that $\MAXEFM$ is $W[1]$-hard on several interesting parameters.
\begin{theorem}\label{thm:hardness}
		$\MAXEFM$ is $W[1]$-hard when the parameter is $q$ or $n_d$ or $d$ 
		and is para-$\NP$-hard when parameter is $t$ even under $\ONEONELQ$ setting.
\end{theorem}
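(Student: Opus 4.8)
The plan is to prove Theorem~\ref{thm:hardness} by reduction from (variants of) \IS, exploiting Lemma~\ref{lem:stbl_feas} and Proposition~\ref{prop:p1} to control exactly which agents can be matched in an envy-free matching.

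For the three $W[1]$-hardness claims I would reduce from \textsc{Multicolored Independent Set}: given a graph $H$ whose vertices are partitioned into $k$ color classes $V_1,\dots,V_k$ (padded to equal size), decide whether $H$ has an independent set with one vertex from each class; this is $W[1]$-hard in $k$. Given $(H,k)$, I would build a \ONEONELQ\ instance $G$ with two kinds of gadgets. First, for each class $i\in[k]$ a \emph{selection gadget} containing a single $\LQ$ resource $b_i$ with $q^-(b_i)=q^+(b_i)=1$, whose preference list ranks one ``selector'' agent per vertex of $V_i$: feasibility forces $b_i$ to be matched, and then the envy constraints together with size-maximization force the rest of the gadget into a canonical state encoding the choice of a single vertex $\sigma(i)\in V_i$, broadcasting this choice via a fan of (mostly private) non-$\LQ$ resources arranged in short chains — chains being necessary because every upper quota must equal~$1$. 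Second, for each pair $i<j$ a \emph{conflict gadget} using only $O(1)$ $\LQ$ resources, wired so that it can be completed to its full size iff $\sigma(i)$ and $\sigma(j)$ are non-adjacent in $H$, and otherwise necessarily leaves an agent unmatched in every feasible envy-free matching. I would set a target size $N^\star$ so that a feasible envy-free matching of size $\ge N^\star$ exists iff all $k$ selections are simultaneously non-conflicting, i.e.\ iff $H$ has the desired multicolored independent set, using Proposition~\ref{prop:p1} (agents unmatched in a stable matching of $G$ are unmatched everywhere) to pin down the matchable set in each configuration and thereby compute $N^\star$ cleanly and close both directions. Finally I would check: the reduction is polynomial-time; $G$ admits a feasible matching (include a default assignment filling every $b_i$ and every conflict-gadget $\LQ$ resource); and the number of $\LQ$ resources is $k+O(k^2)=O(k^2)$, a function of $k$ alone. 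Since in any \ONEONELQ\ instance $d=n_d$ and both are at most the number of $\LQ$ resources $q$, the same instance certifies $W[1]$-hardness for $q$, $n_d$ and $d$ at once.

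For the para-$\NP$-hardness of $t$, the key observation is that envy is insensitive to whether the mediating resource carries a lower quota: an envy pair $(a,a')$ is witnessed by a resource $b=M(a')$ on the lists of both $a$ and $a'$, and $b$ may perfectly well be an $\LQ$ resource. Hence in an \NP-hardness reduction for $\MAXEFM$ (a minor variant of the one in~\cite{DBLP:SAGT20}, or the unparameterized version of the reduction above) one may route all ``competition'' through $\LQ$ resources while keeping non-$\LQ$ resources essentially private, so that every pair of agents shares at most a constant number $c$ of common non-$\LQ$ resources, i.e.\ $t=O(1)$; this is permissible because the number of $\LQ$ resources $q$ is unconstrained in a para-$\NP$ statement. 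Correctness is then inherited from the underlying \NP-hardness, giving \NP-hardness at a fixed value of $t$, i.e.\ para-$\NP$-hardness.

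The main obstacle is the design and analysis of the conflict gadgets for the $W[1]$ reduction: there are $\Theta(|E(H)|)$ adjacencies to forbid but only $O(k^2)$ $\LQ$ resources are allowed, so each conflict gadget must arbitrate all $|V_i|\cdot|V_j|$ potential conflicts with $O(1)$ $\LQ$ resources. This forces the ``conflict signal'' to travel along long, essentially unbounded chains of non-$\LQ$ resources, and the delicate part is to argue precisely that selecting an edge of $H$ propagates through these chains so as to make \emph{some} agent unmatchable in every feasible envy-free matching, whereas the absence of a conflict lets all chains be saturated — all under the $\ONEONELQ$ restriction that rules out high-capacity ``collector'' resources and forces further chain simulations. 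Lemma~\ref{lem:stbl_feas} and Proposition~\ref{prop:p1} are the main tools for making these case analyses rigorous by fixing, in every gadget configuration, exactly which agents an envy-free matching can and cannot cover.
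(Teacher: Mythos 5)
Your overall strategy (reduce from an independent-set-type problem and use envy-freeness to make conflicting selections strand some agent) is in the right spirit, but the proposal has a genuine gap exactly where you locate it yourself: the conflict gadget is never constructed. You assert that with only $O(1)$ $\LQ$ resources per pair of colour classes, plus long chains of private non-$\LQ$ resources, one can arbitrate all $|V_i|\cdot|V_j|$ potential adjacencies so that an adjacent pair of selections makes some agent unmatchable in every feasible envy-free matching, while non-adjacent selections let everything be saturated. Nothing in the proposal shows that envy-freeness can ``propagate'' along such chains: an agent who is demoted or unmatched only constrains resources on its own preference list, so broadcasting a selection through a fan of private chain resources, under a global size target and with all upper quotas equal to $1$, is precisely the hard part, and it is left open. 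The paper sidesteps this entirely with a much flatter construction from plain $\IS$ parameterized by solution size: one vertex-agent $a_i$ per vertex, one edge-agent per edge, a set $B_i$ of $\deg(v_i)+1$ non-$\LQ$ resources whose lists put $a_i$ first, and $k$ unit-lower-quota resources $X$ ranked below $B_i$ by every $a_i$. Feasibility forces exactly $k$ vertex-agents into $X$; a demoted $a_i$ then (by one level of envy, no chains) forbids every resource in $B_i$ to all edge-agents, so an edge with both endpoints selected strands its edge-agent, and a size-$(m+n)$ feasible envy-free matching exists iff the selected vertices are independent. This single instance has $q=d=n_d=k$, giving all three $W[1]$-hardness claims at once; your $O(k^2)$ bound would also suffice in principle, but only once the missing gadget is actually built and analysed.

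The para-$\NP$-hardness claim for $t$ has the same defect in miniature: ``one may route all competition through $\LQ$ resources while keeping non-$\LQ$ resources essentially private'' is not a construction, and correctness is not inherited automatically once you alter which resources carry lower quotas --- indeed, giving the conflict-mediating resources positive lower quotas would force them to be filled and destroy the very mechanism (leaving them empty) that encodes a conflict. The paper's route is simpler and avoids this: it runs the same reduction starting from $\IS$ on $w$-regular graphs (NP-hard for fixed $w\ge 3$); there every pair of agents shares at most $q_i=w+1$ common non-$\LQ$ resources, so $t$ is a constant and para-$\NP$-hardness in $t$ follows with no rerouting at all. To salvage your write-up you would need to either supply and verify the chain/conflict gadgets, or switch to a degree-bounded source problem as the paper does.
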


From the inapproximability result for $\MAXRSM$ in~\cite{DBLP:SAGT20}
we prove the following.
\begin{corollary}\label{corr:rsmparanphard}
$\MAXRSM$ is para-$\NP$-hard when parameterized by $d$ or $n_d$ even under $\ONEONELQ$ setting.
\end{corollary}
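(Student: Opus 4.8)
The plan is to observe that a problem parameterized by $k$ is para-$\NP$-hard as soon as it is $\NP$-hard on a family of instances on which $k$ takes one fixed value, and then to show that the $\NP$-hardness underpinning the $\tfrac{21}{19}-\epsilon$ inapproximability of $\MAXRSM$ in~\cite{DBLP:SAGT20} already holds on $\ONEONELQ$ instances in which a stable matching is feasible, i.e.\ on instances of deficiency $0$. Since for a $\ONEONELQ$ instance $d = n_d$, it suffices to exhibit an $\NP$-hard family of $\ONEONELQ$ instances with $d = n_d = 0$; the two parameterized statements then follow simultaneously.

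First I would revisit the reduction of~\cite{DBLP:SAGT20} that proves $\MAXRSM$ inapproximable under the $\ONEONELQ$ restriction and argue that, in each instance it produces, the Gale--Shapley stable matching computed after deleting all lower-quotas already saturates every $\LQ$ resource. The intuition is that the hardness of $\MAXRSM$ is not caused by any resource being deficient: as the discussion around Fig.~\ref{fig:no_stb_feas} illustrates, a feasible relaxed stable matching can strictly exceed the size of a feasible stable matching, and what the reduction encodes is precisely the combinatorial choice of which extra agents to match --- equivalently, which blocking pairs to tolerate, and only at $\LQ$ resources --- not the repair of any deficiency. Should the construction of~\cite{DBLP:SAGT20} not literally enjoy this property, I would pre-process each instance by attaching to every $\LQ$ resource $b$ a fresh dummy agent placed at the bottom of $b$'s preference list (and listing only $b$); such a dummy never belongs to a blocking pair, forces $b$ to be matched in every stable matching, and shifts the optimal feasible relaxed stable matching value by a controlled, instance-independent amount. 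Either way, the resulting hard family has $d = n_d = 0$.

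The main obstacle is the bookkeeping in this reduction-massaging step: one must check that the (possibly gadgeted) instances preserve the yes/no gap of the original reduction and that no agent outside the gadget changes its stable partner, so that the deficiency genuinely drops to $0$ without weakening $\NP$-hardness. Once this is verified, para-$\NP$-hardness of $\MAXRSM$ with respect to $d$, and hence with respect to $n_d$, is immediate, since the parameter is constantly $0$ on the entire hard family and $\MAXRSM$ is already stated to be $\NP$-hard there even under the $\ONEONELQ$ restriction.
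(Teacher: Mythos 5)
Your primary route is exactly the paper's: the paper notes that the reduced instance in Fig.~4 of \cite{DBLP:SAGT20} is already a $\ONEONELQ$ instance and exhibits an explicit feasible stable matching in it, so $d = n_d = 0$ on the hard family and para-$\NP$-hardness in $d$ and $n_d$ follows directly from the $\NP$-hardness (Theorem~4 of \cite{DBLP:SAGT20}). Your fallback dummy-agent gadget turns out to be unnecessary, and note it would need care anyway, since a dummy acceptable to an $\LQ$ resource could itself satisfy that resource's lower quota and thus change what the reduction encodes; but as your main argument is the one the paper uses, the proposal is essentially the same proof.
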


We present our algorithmic results for the $\ONEONELQ$ instances.
In Section~\ref{sec:disc} we discuss the applicability of these results for $\MANYONELQ$ instance.
We assume that the underlying bipartite graph of the input $\ONEONELQ$ instance 
admits a feasible, envy-free matching. 
We show that both problems admit a polynomial kernel
under relevant parameterization. 
\begin{theorem}\label{thm:kernel}
	$\MAXEFM$ admits a polynomial kernel in $(s,t)$.
	When stable matching is feasible, $\MAXRSM$ admits a polynomial kernel in $s$.
\end{theorem}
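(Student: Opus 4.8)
\medskip
\noindent\emph{Proof plan.} For both kernels I would fix a stable matching $M_s$ of the underlying $\ONEONE$ instance, obtained by deleting all lower-quotas and running the Gale--Shapley algorithm; it has size $s$, and by the Rural Hospitals Theorem the set $\A_s$ of agents and the set $\B_s$ of resources matched by a stable matching are invariant, hence well-defined and computable in polynomial time. Since the $\MAXEFM$ instance is assumed to admit a feasible envy-free matching (of size at most $s$, as recalled above) and, in the $\MAXRSM$ case, $M_s$ is itself feasible, and since every feasible matching saturates all $\LQ$ resources, we get $q\le s$; so all $\LQ$ resources together with $\B_s$ --- $O(s)$ resources in all --- will be retained in both kernels. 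It then remains to add a number of further resources (and, for $\MAXRSM$, of agents) that is polynomial in the parameters, truncate every preference list to the surviving vertices (keeping relative order), and show that the optimum transfers in both directions between the instance and the kernel.

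\medskip
\noindent\emph{The $\MAXEFM$ kernel.} By Proposition~\ref{prop:p1} no agent outside $\A_s$ is matched in any envy-free matching; such an agent $x$ only constrains each of its neighbours $b\in\B_s$ (by stability of $M_s$, $N(x)\subseteq\B_s$ and $b$ ranks $x$ below $M_s(b)$) through the ``floor'' that $b$ must not be matched below $x$. I encode this by retaining, for each $b\in\B_s$, a single stand-in agent with singleton list $\{b\}$ inserted at the rank of $b$'s highest-ranked deleted neighbour; stability of $M_s$ shows this stand-in is never matched in a kernel solution. This leaves $O(s)$ agents. For the resources: each relevant non-$\LQ$ resource $b\notin\B_s$ acceptable to two or more agents of $\A_s$ is retained --- there are at most $\binom{s}{2}\,t$ of these, since each is charged to a pair of $\A_s$-agents, which by definition of $t$ share at most $t$ non-$\LQ$ resources --- and for a resource acceptable to a single agent $a\in\A_s$ (where, again by stability of $M_s$, $b$ is ranked below $M_s(a)$ by $a$) only the $a$-best such private resource is kept, one per agent. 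The kernel then has $O(s^2 t)$ vertices and $O(s^3 t)$ edges. For correctness, given a maximum feasible envy-free matching $M$ of the instance one re-routes each agent whose $M$-partner is a deleted private resource to its retained private resource: this is at least as good for the agent, non-$\LQ$, and adjacent to no one else, so feasibility and size are preserved and no envy pair is created (a new envy pair would have to be witnessed through a retained private resource --- which has no second neighbour --- or through an agent whose assignment improved, which would then already have been an envy pair in $M$). Conversely, any feasible envy-free matching $M^k$ of the kernel is feasible in the instance (all $\LQ$ resources lie in the kernel) and envy-free in it: no deleted resource can witness an envy pair (it is no one's partner), and the stand-in agents force $M^k(b)$, for each matched $b\in\B_s$, above all deleted neighbours of $b$, so no deleted agent envies.

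\medskip
\noindent\emph{The $\MAXRSM$ kernel (when $M_s$ is feasible).} Here agents outside $\A_s$ may be matched, but since $M_s$ is stable, hence maximal, every matching --- in particular every relaxed stable matching --- has size at most $2s$, and using stability of $M_s$ one gets the structural facts that every resource $b\notin\B_s$ is non-$\LQ$ with $N(b)\subseteq\A_s$ and ranks each neighbour below $M_s(b)$, and dually every agent $x\notin\A_s$ has $N(x)\subseteq\B_s$ with each such resource ranking $x$ below its $M_s$-partner; hence in any relaxed stable matching at most $s$ agents from outside $\A_s$ and at most $s$ resources from outside $\B_s$ are matched. I would then retain, for each $b\in\B_s$, a polynomially bounded set of its neighbours in $\A\setminus\A_s$ (its highest-ranked ones, plus a few more to absorb conflicts in the exchange argument) and, for each $a\in\A_s$, a polynomially bounded set of its neighbours in $\B\setminus\B_s$, obtaining a kernel of size polynomial in $s$; correctness again follows by replacing each vertex used outside the retained set by a retained ``dominating'' vertex and checking that this preserves feasibility, the size, and the relaxed-stability condition --- that every blocking agent is matched to an $\LQ$ resource.

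\medskip
\noindent\emph{Main obstacle.} For $\MAXEFM$ the crux is the non-$\LQ$-resource reduction: showing that keeping only the $a$-best private resource per agent (all shared ones being retained, which $t$ keeps polynomial) loses nothing --- i.e.\ that the re-routing is globally and not merely locally envy-free --- together with correctly accounting for the floors imposed by the deleted agents via the stand-in gadget. For $\MAXRSM$ the difficulty is sharper: bringing in agents from outside $\A_s$ couples with relaxed stability non-locally, so the delicate point is to pin down exactly which (and how many) neighbours must be retained on each side and to prove that the substitution never turns a previously non-blocking agent into one that blocks while matched to a non-$\LQ$ resource.
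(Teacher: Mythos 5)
Your $\MAXEFM$ kernel is a genuinely different construction from the paper's. You retain \emph{vertices} (your $\A_s=X_A$, all $\LQ$ resources, $\B_s=X_B$, the shared non-$\LQ$ resources, one best private resource per agent) and take the induced subinstance, encoding the deleted agents through one stand-in dummy per resource of $\B_s$; the paper instead marks \emph{edges} (its four-step marking scheme) and keeps the spanning subgraph, and because it deletes edges between retained vertices its backward direction needs the agent-proposing Gale--Shapley repair inside the proof of Lemma~\ref{lem:3lemcorr2}, which your induced-subinstance design avoids. The price is the sentence ``stability of $M_s$ shows this stand-in is never matched in a kernel solution'', which you assert but which is not a one-line consequence of stability: if $M^k(b)=d_b$, then $M_s(b)\succ_b d_b$ forces $M_s(b)$ to be matched strictly better than $b$, its new resource must lie in $\B_s$ (else it would block $M_s$), and so on; one has to drive this alternating sequence through $X_A\times X_B$ and argue it can neither cycle nor return to $d_b$ --- this is exactly the chain argument of the paper's Claim~\ref{cl:cllem}. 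Without that claim a kernel solution could use a stand-in edge and the size equivalence with $G$ breaks; with it supplied (it is true), your forward re-routing to the best private resource and your backward transfer (deleted resources are unmatched, the stand-in enforces the floor above all deleted neighbours of $b$) do go through, and the $O(s^2t)$-vertex, $O(s^3t)$-edge bound is fine.

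For $\MAXRSM$, however, there is a genuine gap, which you flag yourself: ``its highest-ranked ones, plus a few more to absorb conflicts'' and ``pin down exactly which (and how many) neighbours must be retained'' is the entire content of this half of the theorem, not a detail to be filled in later, and your planned exchange/domination argument is the harder (and unneeded) route. The paper's resolution is: for every vertex $v$ matched in $M_s$ keep its $\min(2s+1,\ell(v))$ most-preferred edges. Then $M_s$ itself survives (Claim~\ref{cl:cl1}), hence is stable in $G'$, so every relaxed stable matching of $G'$ has size at most $2s$; and both directions follow by pure counting, with no substitution of vertices at all: if a relaxed stable matching (of $G$ in Lemma~\ref{lem:rsmcorr2}, or of $G'$ in Lemma~\ref{lem:rsmcorr}) used, or were blocked through, an unmarked edge at $v$, then among the $2s+1$ retained higher-preferred edges at $v$ some partner is unmatched in the ($\le 2s$)-size matching, yielding a blocking pair involving an unmatched agent or an agent matched to a non-$\LQ$ resource, contradicting relaxed stability. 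This is precisely what dissolves your worry that a substitution might ``turn a previously non-blocking agent into one that blocks while matched to a non-$\LQ$ resource''; as written, your sketch neither fixes the retention bound nor proves either transfer direction, and it also leaves the backward direction (kernel solution relaxed stable in $G$) unaddressed. Feasibility of $M_s$ is what guarantees every resource outside $X_B$ is non-$\LQ$, which both your structural facts and the paper's argument rely on.
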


Although $\MAXEFM$ is $W[1]$-hard on $q$ alone, we show that $\MAXEFM$ admits an
$\FPT$ algorithm when parameterized on $q$ and $\ell_{LQ}$, and when parameterized 
on $|\overline{\A}|$. We show analogous $\FPT$ results for $\MAXRSM$.

\begin{theorem}\label{thm:fpt}
	$\MAXEFM$ and $\MAXRSM$ are $\FPT$ when parameters are ($q,\ell_{LQ})$ or $|\overline{\A}|$.
\end{theorem}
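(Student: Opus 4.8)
The plan is to reduce the genuinely hard instances, in fixed‑parameter time, to the two cases that are already polynomially solvable: a maximum‑size feasible envy‑free matching in an instance \emph{without} $\LQ$ resources, and a stable matching (Gale--Shapley). Since every neighbour of an $\LQ$ resource lies in $\overline{\A}$ we have $|\overline{\A}|\le q\cdot\ell_{LQ}$, so a fixed‑parameter algorithm with parameter $|\overline{\A}|$ automatically yields one with parameter $(q,\ell_{LQ})$; the same branching in fact works verbatim with either parameter, so I describe it once.

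For $\MAXEFM$: compute a stable matching $M_s$ of the underlying $\ONEONE$ instance; if it is feasible, output it, since it is optimal by Lemma~\ref{lem:stbl_feas}. Otherwise, observe that in any feasible matching the edges incident to $\LQ$ resources form an injection from the $q$ $\LQ$ resources into $\overline{\A}$, and there are at most $\ell_{LQ}^{\,q}\le|\overline{\A}|^{|\overline{\A}|}$ such injections — a fixed‑parameter–bounded number of \emph{seeds} $N_0$. Branch over all seeds. For a fixed $N_0$ I would: (i) discard $N_0$ if it contains an envy pair among the $\LQ$ resources, or matches an agent unmatched in $M_s$ (ruled out by Proposition~\ref{prop:p1}); (ii) delete the $\LQ$ resources and the agents $N_0$ assigns to them, which leaves an $\LQ$‑free instance; and (iii) perform preference‑list surgery encoding the cross‑boundary envy constraints forced by $N_0$: for an $\LQ$ resource $b$ with $a=N_0(b)$, every agent $a'$ with $a'\succ_b a$ (necessarily in $\overline{\A}$) must be matched strictly above $b$, so truncate $a'$'s list below $b$ and mark $a'$ obligatory; and every resource $b''$ that $a$ prefers to $b$ must not be matched below $a$, so delete the offending edges at $b''$. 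On the resulting $\LQ$‑free instance run the polynomial $\MAXEFM$ algorithm, subject to the obligatory agents being matched; since those agents all lie in $\overline{\A}$, this side condition can itself be folded into the seed enumeration (guessing, for each obligatory agent, either the $\LQ$ resource it is matched to or the fact that it is matched to a non‑$\LQ$ resource, whose suitability the residual solver then certifies). If the residual instance admits no solution, discard the seed; otherwise combine $N_0$ with the residual matching, and finally output the largest matching produced over all seeds.

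Correctness has two halves. The routine half is that, after the surgery, a feasible envy‑free matching of the residual instance combined with $N_0$ is feasible and envy‑free in the original, and conversely the restriction of any feasible envy‑free matching with $\LQ$‑part $N_0$ is a legal residual solution — a careful case analysis straight from the definition of an envy pair (the truncations kill exactly the envy of an $\overline{\A}$‑agent towards $N_0(b)$, the edge deletions kill exactly the envy towards $N_0(b)$ through a better resource). The delicate half is that some seed's best completion attains the true optimum; here I would invoke the extension lemma proved in this work — a \emph{minimal feasible} envy‑free matching can be grown, one edge at a time while staying envy‑free, to a maximum‑size feasible envy‑free matching — applied to the minimal feasible envy‑free matching whose $\LQ$‑part is the seed realised by an optimal solution, and combined with Proposition~\ref{prop:p1}. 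I expect this to be the main obstacle: one must show that fixing only the $\LQ$‑incident edges of an optimal matching is without loss, i.e., that the residual polynomial subproblem (surgeries and obligatory marks in place) still admits a completion of optimal size, which is precisely where the extension lemma does the heavy lifting.

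For $\MAXRSM$ the template is identical, with two simplifications — but note there is no easy short‑circuit, since $\MAXRSM$ is $\NP$‑hard even when the stable matching is feasible, so the full branching is always performed. First, an $\LQ$‑free instance has no agent permitted to block, so a relaxed stable matching there is exactly a stable matching; by the Rural Hospitals Theorem all stable matchings have the same size, so the residual subproblem is solved outright by Gale--Shapley. Second, in the cross‑boundary surgery the agent $a=N_0(b)$ matched to an $\LQ$ resource is itself allowed to block, hence imposes no constraint on resources it prefers to $b$; only the constraint that each $a'$ with $a'\succ_b a$ must be matched above $b$ (or to some other $\LQ$ resource, recorded in the seed) survives, because otherwise $(a',b)$ is a blocking pair whose blocking agent is unmatched or matched to a non‑$\LQ$ resource. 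Branching over the same $\ell_{LQ}^{\,q}\le|\overline{\A}|^{|\overline{\A}|}$ seeds, running Gale--Shapley on each residual instance, and again invoking the extension lemma — now for relaxed stable matchings, which is exactly why this case is still non‑trivial, as a relaxed stable matching can be strictly larger than the stable one — yields the claimed fixed‑parameter algorithm, and the bound $|\overline{\A}|\le q\cdot\ell_{LQ}$ transfers it to the parameter $(q,\ell_{LQ})$.
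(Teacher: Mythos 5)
Your algorithmic template is the paper's: branch over the at most $\ell_{LQ}^{\,q}$ (or $|\overline{\A}|!$) assignments of agents to $\LQ$ resources, complete each seed by a Gale--Shapley computation on a residual instance, and return the largest valid combination. The genuine gap is at the step you yourself flag as the crux. For the seed induced by an optimal matching $M^*$ one must prove that the completion the algorithm computes is itself envy-free (resp.\ relaxed stable) in $G$ \emph{and} at least as large as $M^*$, and you delegate this to ``the extension lemma proved in this work''. No such unconditional lemma is available, and you do not prove one: the paper's Corollary~\ref{corr:extend} is conditional ($M'$ is a maximum extension of $M$ only \emph{if} $M'$ happens to be envy-free in $G$), and showing that the extension of the optimal seed \emph{is} envy-free is precisely the delicate part of the paper's Lemma~\ref{lem:fpt_efm}, argued via the agent-optimality of the stable matching in $\ALG$ and a hypothetical resource-proposing round; for relaxed stability the paper states no extension lemma at all and Lemma~\ref{lem:fpt_rsm} argues directly through agent-optimality. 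Worse, the object you propose to apply it to, ``the minimal feasible envy-free matching whose $\LQ$-part is the seed realised by an optimal solution'', need not exist: the restriction $M^*_{LQ}$ is in general not envy-free as a standalone matching, since every agent ranked above the seeded agent by some $\LQ$ resource is unmatched in that restriction and hence envies. So the heart of the theorem is asserted rather than proved.

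There are secondary, fixable problems in the same step. The residual $\MAXEFM$ subproblem ``subject to the obligatory agents being matched'' is not a constraint Gale--Shapley accepts, and folding it into the seed enumeration is circular: the seed already determines exactly which agents occupy $\LQ$ resources, and guessing that an obligatory agent is ``matched to a non-$\LQ$ resource'' does not make the residual solver respect that. What your surgery framework actually needs (and can supply) is: discard a seed whenever the stable matching of the truncated residual leaves an obligatory agent unmatched, with safety justified by Proposition~\ref{prop:p1} applied to the modified residual instance; and, for the optimal seed, observe that the restriction of $M^*$ to non-$\LQ$ resources survives your truncations and deletions (any deleted edge would create envy or an illegal blocking pair against the seed in $M^*$), is envy-free (resp.\ stable) there, so Lemma~\ref{lem:stbl_feas} and Proposition~\ref{prop:p1} (resp.\ the Rural Hospitals Theorem) both force the obligatory agents to be matched and bound the residual solver's output from below, while the cross-boundary case analysis shows the combination is valid in $G$. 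Spelled out, this would give a correct proof that in fact bypasses the paper's agent-optimality machinery; as written, both that case analysis (your ``routine half'') and the optimality argument are only asserted, so the proposal does not yet constitute a proof.
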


We summarize our results in Table~\ref{tab:summary}.

\begin{table}[!ht]
{\footnotesize
	\begin{tabular}{|p{0.1\textwidth}|p{0.3\textwidth}|p{0.22\textwidth}|p{0.22\textwidth}|}\hline
		\multirow{2}{*}{\bf Problem} & \multirow{2}{*}{\parbox{0.2\textwidth}{\bf Classical\\ complexity}} & \multicolumn{2}{c|}{\bf Parameterized results} \\\cline{3-4}
		& & \bf Hardness & \bf Polynomial kernel and $\FPT$ \\\hline
		$\MAXEFM$ & $\frac{21}{19}-\epsilon,\epsilon > 0$ inapproximability even for $\ONEONELQ$.~\cite{DBLP:SAGT20} In $\Poly$ when $q=0$ or when $d=0$ & $W[1]$-hard in each of $q, d, n_d$, 
		para-$\NP$-hard in $t$ even for $\ONEONELQ$ (Thm.~\ref{thm:hardness}) & $poly(s,t)$ kernel (Thm.~\ref{thm:kernel}), $\FPT$ in $(q,\ell_{LQ})$ and in $|\overline{\A}|$ (Thm.~\ref{thm:fpt}) \\\hline
		$\MAXRSM$ & $\frac{21}{19}-\epsilon,\epsilon > 0$ inapproximability even for $\ONEONELQ$ and when stable matching is feasible~\cite{DBLP:SAGT20} & para-$\NP$-hard in each of $d, n_d$ even for $\ONEONELQ$ (Cor.~\ref{corr:rsmparanphard}) & $poly(s)$ kernel when stable matching is feasible (Thm.~\ref{thm:kernel}), $\FPT$ in $(q,\ell_{LQ})$ and in $|\overline{\A}|$ (Thm.~\ref{thm:fpt}) \\\hline
	\end{tabular}
	\caption {Summary of results}
	\label{tab:summary}
	}
\end{table}

\noindent{\bf Other related work.}
In~\cite{MARX201125}, a $\MANYONE$ setting with couples parameterized by the number of couples is investigated.
Chen~et~al.~\cite{chen_et_al:LIPIcs:2018:9039} investigate parameterized complexity of 
variants of the stable roommates problem. 
In Bir{\'{o}}~et~al.~\cite{BFIM10}, colleges (resources) have lower-quotas that are either fulfilled or the colleges are closed. 
Under the setting studied in~\cite{BFIM10}, Boehmer~et~al.~\cite{DBLP:conf/wine/BoehmerH20} investigate the parameterized complexity of computing a stable matching (when exists). 
In~\cite{DBLP:journals/algorithmica/MarxS10}, 
the $\SM$ setting with ties and incomplete lists
parameterized by the number of ties and the maximum or total length of a tie is investigated.
Nasre and Nimbhorkar~\cite{NN17} investigate {\em popularity}
for $\MANYONELQ$ instances and show that a $\MANYONELQ$ instance always admits
a feasible, popular matching and 
that a maximum size feasible, popular matching is efficiently computable.
Empirical results comparing sizes of
feasible, envy-free and popular matchings 
are presented in~\cite{MNNR18}.
Structural properties of envy-free matchings under $\MANYONE$ setting are studied in~\cite{WR18}.

\vspace{0.1cm}
\noindent{\em Organization of the paper. }
We present our kernelization results in Section~\ref{sec:kernel} and $\FPT$ results in Section~\ref{sec:fptalgos}.
In Section~\ref{sec:hardness}, we present our hardness results.
In Section~\ref{sec:disc}, we discuss how our algorithmic results are applicable in $\MANYONELQ$ setting.

\section{Kernelization results}\label{sec:kernel}
In this section we present our kernelization results.
Given a $\ONEONELQ$ instance $G$ and an integer $k$,
we construct a kernel $G'$ such that $G$ admits a desired (feasible and either envy-free or relaxed stable) 
matching of size $k$ if and only if $G'$ admits a desired matching of size $k$ and 
$G'$ has polynomial size in the chosen parameter(s).
Recall that $s$ and $t$ respectively denote the size of a stable matching in the instance and
the number of non-$\LQ$ resources that are common in the preference list of a pair of agents.
Let $\ell(v)$ denote the length of the preference list of vertex $v$.

\subsection{A polynomial kernel for the $\MAXEFM$ problem}\label{sec:efmkernel}
We start by computing a stable matching $M_s$ in $G$. Then, $s = |M_s|$.
By Rural Hospitals Theorem~\cite{Roth86}, the set of agents unmatched in a stable matching
is invariant of the stable matching.
This fact and Proposition~\ref{prop:p1}
together imply that if $|M_s| < k$, we have a ``No'' instance.
A stable matching is envy-free, therefore if $|M_s| \geq k$ and $M_s$ is feasible, then 
we have a ``Yes'' instance. Otherwise, $|M_s| \geq k$ but $M_s$ is infeasible. 
We construct the graph $G'$ as follows.

\noindent {\bf Construction of the graph $G'$:}
Let $X$ be the vertex cover computed by picking matched vertices in $M_s$. Then, $|X| = 2s$. 
Let $X_B \subset X$, be the set of resources in $X$ and $X_A = X \backslash X_B$ be the set of agents in $X$. 
Then, $|X_B| = |X_A| = s$. 
Since, $M_s$ is maximal, $I = V \setminus X$ is an independent set.

\noindent {\bf Marking scheme: } 
We perform the following steps in the sequence given.
\begin{enumerate}
	\item For every $a \in X_A$ and every neighboring $\LQ$ resource $b$, mark the edge $(a,b)$.\label{st:stlq}
	\item For $b \in X_B$, 
		let $w(b)$ denote the minimum of $s+1$ and $\ell(b)$.
		Mark the highest-preferred $w(b)$ edges incident on $b$.
		\label{st:sthedges}
	\item Construct the set $C_a$ of resources corresponding to $a \in X_A$ as follows. The set $C_a$ consists of non-$\LQ$ resources that are common to the preference list of $a$ and some matched agent in $M_s$. That is,\\
		$C_a = \{b \in \B \mid$ $q^-(b) = 0$ and $ \exists a' \in X_A, a' \neq a$ and $b$ is in preference list of $a$ and $a'\}$.

		For every $a \in X_A$, $b \in C_a$, mark the edge $(a, b)$. \label{st:stcommonnlq}
	\item Amongst the unmarked edges (if any) incident on $a \in X_A$ mark the edge to the most-preferred resource $b$. \label{st:stextra}
\end{enumerate}

\noindent First we state the following observations about the marking scheme.
\begin{remark}\label{rem:r3}
	By step~\ref{st:stlq} of the marking scheme, for every $\LQ$ resource $b$,
	all the edges between $b$ and an agent in $X_A$ are marked.
\end{remark}
\begin{remark}\label{rem:r1}
	If an edge $(a,b)$ is not marked 
	for agent $a \in X_A$ or is marked in step~\ref{st:stextra} of the marking scheme 
	then resource $b$ is a non-$\LQ$ resource. This is because if $b$ were an $\LQ$ resource then the edge $(a,b)$
	would have been marked in step~\ref{st:stlq} of the above marking scheme.
	If $(a,b)$ is marked in step~\ref{st:stextra} then $b$ has exactly one neighboring agent $a$ in $X_A$, otherwise $(a,b)$ would have been marked in step~\ref{st:stcommonnlq}.
\end{remark}
\begin{remark}\label{rem:r2}
	If an edge $(a,b)$ is not marked 
	for agent $a \in X_A$ 
	then there exists a resource $b' \succ_a b$ such that $(a,b')$ was marked in step~\ref{st:stextra} of the marking scheme.
	By Remark~\ref{rem:r1}, $b'$ is a non-$\LQ$ resource and has exactly one neighbouring agent $a$ in $X_A$.
\end{remark}

Once the edges are marked using the above marking scheme, we let $G'$
be the graph spanned by the marked edges. Let $E'$ denote the edge set of $G'$.
Lemma~\ref{lem:3kernel}, Lemma~\ref{lem:3lemcorr} and Lemma~\ref{lem:3lemcorr2} prove that 
$G'$ is a polynomial kernel.

\begin{lemma}\label{lem:3kernel}
	The graph $G'$ has $O(poly(s, t))$-size.
\end{lemma}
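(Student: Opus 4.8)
The plan is to bound the number of vertices and edges in $G'$ by counting the edges introduced in each of the four marking steps separately, and then arguing that the number of vertices incident on marked edges is at most proportional to the number of marked edges (so bounding edges suffices).

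First I would count the marked edges step by step. Step~\ref{st:stlq} marks, for each agent $a \in X_A$, all edges from $a$ to neighbouring $\LQ$ resources; since $|X_A| = s$ and each such edge has an $\LQ$ endpoint, but more usefully each agent contributes at most $\ell(a)$ of these — I expect the clean bound to come from the $\LQ$ side: there are at most $s$ agents in $X_A$, but the total here is at most $s \cdot q$ if we use the number of $\LQ$ resources, or we can simply note it is at most $s$ times the maximum degree. Actually the intended bound is surely in terms of $s$ and $t$ only, so I would instead observe that every $\LQ$ resource is matched in $M_s$ unless it is deficient, and reason about $X_B$; the cleanest route is: step~\ref{st:sthedges} marks at most $(s+1)$ edges for each of the $s$ resources in $X_B$, giving $O(s^2)$ edges; step~\ref{st:stcommonnlq} marks, for each $a \in X_A$, the edges to resources in $C_a$, and $|C_a| \le$ (number of agents $a' \in X_A$ times the number $t$ of common non-$\LQ$ resources per pair) $\le s \cdot t$, hence $O(s^2 t)$ edges total; step~\ref{st:stextra} marks at most one edge per agent in $X_A$, i.e.\ at most $s$ edges. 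For step~\ref{st:stlq}, the key point is that every edge marked there is incident to a vertex in $X_A$ and to an $\LQ$ resource that is matched in $M_s$ or deficient; since $\LQ$ resources matched in $M_s$ lie in $X_B$ and are already handled in step~\ref{st:sthedges}, the genuinely new content is bounded by $s$ agents times the number of deficient $\LQ$ resources — but to keep everything in $(s,t)$ I would just bound it crudely by $s \cdot (s+1)$ using that the relevant resources either sit in $X_B$ (size $s$) or... here lies the subtlety. I would fall back on: the number of edges marked in step~\ref{st:stlq} incident on each $a\in X_A$ is at most the number of its $\LQ$ neighbours, and any $\LQ$ neighbour that is unmatched contributes, so this is at most $|X_A|$ times (something).

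Summing the four steps gives $|E'| = O(s^2 t)$. Then I would bound the vertex count: every vertex of $G'$ is an endpoint of a marked edge, and since $G'$ has no isolated vertices, $|V(G')| \le 2|E'| = O(s^2 t)$. Combining, $G'$ has $O(\mathrm{poly}(s,t))$ size, which is the claim. I would also remark that the encoding of preferences restricted to $G'$ only shortens lists, so the instance size is polynomial in the same parameters.

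The main obstacle I anticipate is step~\ref{st:stlq}: bounding the number of $(a,b)$ edges with $a \in X_A$ and $b$ an $\LQ$ resource purely in terms of $s$ and $t$, because a priori an agent in $X_A$ could have many $\LQ$ neighbours and the number of $\LQ$ resources is not one of the two kernel parameters. The resolution should be that $\LQ$ resources matched in a stable matching are exactly the members of $X_B$ whose partner is an agent, so their count is at most $s$; the deficient $\LQ$ resources are those unmatched in $M_s$, and — since the instance is assumed to admit a feasible matching — Proposition~\ref{prop:p1} and the Rural Hospitals Theorem constrain how many such resources can be relevant, but if a clean $(s,t)$ bound is not immediate here one instead absorbs this count into the parameter via the observation that feasibility forces each deficient $\LQ$ resource to be matched in the target matching to some agent, and distinct such agents, so the number of deficient $\LQ$ resources is at most $s$ as well (the target matching of size $\ge k$ has all agents among those matched in $M_s$, of which there are $s$). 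Hence step~\ref{st:stlq} contributes $O(s^2)$ edges and the overall $O(s^2t)$ bound goes through.
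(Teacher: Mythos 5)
Your proposal is correct and takes essentially the same route as the paper: per-step edge counts (at most $s+1$ marked edges per resource in $X_B$, at most $(s-1)t$ edges from step~\ref{st:stcommonnlq} and one from step~\ref{st:stextra} per agent in $X_A$), together with the key observation that the number of $\LQ$ resources is $O(s)$ — the paper states $q \le s$ directly, since a feasible, envy-free matching matches every $\LQ$ resource to a distinct agent who, by Proposition~\ref{prop:p1}, is matched in $M_s$ — giving $O(s^2)$ edges from step~\ref{st:stlq} and $O(s^2 t)$ edges overall with no isolated vertices. Your detour through deficient versus matched $\LQ$ resources (yielding the slightly weaker $q \le 2s$) is only a cosmetic difference.
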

\begin{proof}
	Recall that $q$ denotes the number of $\LQ$ resources in the instance.
	For each agent $a$ in $X_A$, let $q_a$ denote the number of $\LQ$ resources in its list. Then, $q_a \leq q$.
	Agent $a \in X_A$ has at most $q+(s-1)t+1$ edges marked.
	We remark that $q \leq s$, otherwise $G$ does not admit a feasible, envy-free matching.
	Each resource in $X_B$ has at most $s+1$ edges marked.
	Thus, $G'$ has at most $s(s+(s-1)t+1)+s(s+1)$ edges and no isolated vertices.
	Hence, $G'$ has $O(poly(s,t))$ size.
\end{proof}

\begin{lemma}\label{lem:3lemcorr}
If $G$ admits a feasible, envy-free matching of size $k$ then $G'$ admits a feasible,
envy-free matching of size $k$.
\end{lemma}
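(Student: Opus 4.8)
The plan is to take a feasible, envy-free matching $M$ in $G$ of size $k$ and transform it into a matching $M'$ that uses only marked edges (hence lives in $G'$), while preserving size, feasibility and envy-freeness. First I would record that, since every agent unmatched in the stable matching $M_s$ stays unmatched in every envy-free matching (Proposition~\ref{prop:p1}), every agent matched by $M$ must lie in $X_A$; similarly every resource matched by $M$ lies in $X_B$. So $M$ is a perfect-on-support matching between $X_A$ and $X_B$. Now iterate over agents $a \in X_A$ with $M(a) = b$ such that the edge $(a,b)$ is \emph{not} marked. By Remark~\ref{rem:r1}, $b$ is a non-$\LQ$ resource, and by Remark~\ref{rem:r2} there is a resource $b' \succ_a b$ with $(a,b')$ marked in step~\ref{st:stextra}, where $b'$ is a non-$\LQ$ resource whose only neighbour in $X_A$ is $a$. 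The key move is to reassign $a$ from $b$ to $b'$ (and drop $b$, which is a non-$\LQ$ resource so feasibility is unaffected).

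The main obstacle — and the part deserving care — is showing this rerouting does not create an envy pair or break feasibility. Feasibility is easy: we only ever vacate non-$\LQ$ resources (Remark~\ref{rem:r1}) and we never touch the assignment of any $\LQ$ resource, so no $\LQ$ resource becomes deficient; also the size is unchanged since $a$ stays matched. For envy-freeness I would argue as follows. Moving $a$ up its preference list to $b'$ cannot cause $a$ itself to envy anyone it did not already envy. It remains to check that no other agent $a''$ now envies $a$ at $b'$: such an $a''$ would need $b' \succ_{a''} M'(a'')$ and $a'' \succ_{b'} a$; but $b'$ has $a$ as its unique neighbour in $X_A$, and every matched agent is in $X_A$, so $b'$ has no neighbour $a'' \ne a$ that is matched in $M'$ — hence no such envy pair. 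Finally, vacating $b$ could in principle let some $a''$ envy $a$'s old partner, but $b$ is now empty, and an empty resource cannot be the object of an envy pair (envy requires $M(a'') = b$ for the envied agent $a''$, i.e.\ the envied-through resource must be occupied by the envied agent). One subtlety: if two agents $a_1, a_2$ both want to be rerouted, do their target resources $b'_1, b'_2$ collide? They cannot: each step-\ref{st:stextra} resource has a \emph{unique} neighbour in $X_A$, so distinct agents get distinct targets, and moreover $b'_i$ was unoccupied in $M$ (only $X_A$-matched agents occupy resources, and $b'_i$'s sole $X_A$-neighbour is $a_i$, who was matched to $b_i \ne b'_i$). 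So the reroutes are independent and can be performed simultaneously.

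After processing all such agents we obtain $M'$ with $|M'| = |M| = k$, supported entirely on marked edges, feasible and envy-free; since $G'$ is exactly the graph spanned by the marked edges, $M'$ is a feasible, envy-free matching in $G'$ of size $k$, as required. I would write the argument as a single simultaneous modification (rather than a true induction) precisely to sidestep having to re-verify, after each individual swap, that the Remark~\ref{rem:r1}–\ref{rem:r2} structure is still intact — the remarks are statements about the fixed marking scheme and the fixed stable matching $M_s$, not about the evolving matching, so they apply verbatim to every agent simultaneously.
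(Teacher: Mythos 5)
Your overall strategy is the paper's: reroute every matched unmarked edge $(a,b)$ to the step~\ref{st:stextra} resource $b' \succ_a b$, and your observation that the reroutes cannot collide (each such $b'$ has a unique neighbour in $X_A$) is a fair packaging of the paper's edge-by-edge modification. However, your envy-freeness verification has a genuine gap at exactly the point you flag as the part deserving care. You rule out a new envy pair directed at $a$ sitting on $b'$ by arguing that $b'$ has no neighbour other than $a$ that is \emph{matched} in $M'$. But the definition of envy does not require the envious agent to be matched: an unmatched agent $a''$ with $(a'',b') \in E$ and $a'' \succ_{b'} a$ envies $a$, since $b' \succ_{a''} M'(a'') = \bot$ holds trivially. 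By Proposition~\ref{prop:p1} all agents of $I$ are unmatched in $M$, so these are precisely the agents your argument never addresses, and they may well be adjacent to $b'$: the uniqueness property of $b'$ concerns only its neighbours in $X_A$ and says nothing about neighbours in $I$.

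The paper closes this case using the stability of $M_s$: if such an $a'' \in I$ existed then, $I$ being an independent set, $b'$ would lie in $X_B$, and since $a$ is $b'$'s unique neighbour in $X_A$ we would have $(a,b') \in M_s$; stability of $M_s$ then forces $a \succ_{b'} a''$ (otherwise $(a'',b')$, with $a''$ unmatched in $M_s$, would block $M_s$), contradicting $a'' \succ_{b'} a$. Without an argument of this kind your proof is incomplete. A smaller inaccuracy: your opening claim that every resource matched by $M$ lies in $X_B$ is false --- in the case the kernel actually handles, $M_s$ is infeasible, so the deficient $\LQ$ resources lie in $I$ and yet must be matched in the feasible matching $M$ --- but since only the agent-side half of that claim (every matched agent is in $X_A$) is used later, this part merely needs to be deleted, not repaired.
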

\begin{proof}
	Suppose $G$ admits a feasible, envy-free matching $M$ of size $k$.
	If $M \subseteq E'$ then $M$ itself is a feasible and envy-free matching in $G'$.
	Otherwise, an edge $(a,b) \in M$ is absent in $E'$.
	By Proposition~\ref{prop:p1}, $a \in X_A$ and therefore, by Remark~\ref{rem:r1},
	resource $b$ is a non-$\LQ$ resource.
	By Remark~\ref{rem:r2}, there exists a resource $b' \succ_a b$ such that $(a,b')$ was marked
	and $b'$ is a non-$\LQ$ resource and has exactly one neighbouring agent $a$ in $X_A$.

	We claim that $b'$ must be unmatched in $M$. To see this, first note that
	$b'$ has exactly one neighbouring agent $a$ in $X_A$ but $M(a) = b \neq b'$,
	therefore $M(b')$ (if exists) is not in $X_A$.
	If $M(b')$ exists and $M(b') \in I$ then it implies that an agent in $I$ is matched in $M$,
	a contradiction to Proposition~\ref{prop:p1}.
	Therefore, $b'$ must be unmatched in $M$.
	Now, we let $M = M \setminus \{(a,b)\} \cup \{(a,b')\}$.
	The resulting $M$ is a feasible matching since $b'$ was earlier unmatched and
	$b$ is a non-$\LQ$ resource. 
	
	We show that this change does not introduce envy.
	Suppose not. Then, let some agent, say $a'$, envies $a$ after this change.
	It implies that $a' \succ_{b'} a$.
	We show that such an agent $a'$ does not exist.
	As noted earlier, $b'$ has exactly one neighbouring agent $a$ in $X_A$,
	therefore $a' \notin X_A$. Thus, $a'$ must be in $I$.
	This implies that $b' \in X_B$, that is, $b'$ is matched in $M_s$.
	Since $b'$ has exactly one neighbouring agent $a$ in $X_A$, it must be the case
	that $(a,b') \in M_s$. This implies that $a \succ_{b'} a'$, otherwise $(a',b')$
	block the stability of $M_s$. 
	Therefore, $a'$ cannot be in $I$ as well. Thus, no agent envies $a$ after this change,
	that is, $M$ remains envy-free after this change.

	For every edge $(a,b) \in M \setminus E'$, there exists a unique resource $b'$
	with the properties mentioned above. We modify the matching $M$ by deleting the
	edge $(a,b)$ and adding the edge $(a,b')$. As shown earlier, this change results in a
	matching of size $k$ that is feasible and envy-free in $G'$.
\end{proof}

\begin{lemma}\label{lem:3lemcorr2}
If $G'$ admits a feasible, envy-free matching of size $k$ then $G$ admits a feasible,
envy-free matching of size $k$.
\end{lemma}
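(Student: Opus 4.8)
The plan is to show that a feasible, envy-free matching $M'$ of size $k$ in $G'$ can be transformed into one of the same size in $G$. Since $E' \subseteq E$, the matching $M'$ is already a valid matching in $G$ with the same size and the same feasibility status (feasibility depends only on which $\LQ$ resources are matched, and that is unchanged). So the only thing that can go wrong is that $M'$, viewed in $G$, acquires an envy pair $(a, a')$ that uses an edge of $E \setminus E'$. The key observation is that by Proposition~\ref{prop:p1} every matched agent lies in $X_A$, so any such envy pair has both agents in $X_A \cup I$, and in fact $a \in X_A$ is impossible only if... — more carefully, the enviously-desired resource $b$ with $M'(a') = b$ must satisfy $(a,b) \in E$, and if $a \in I$ then $a$ is unmatched, so $a$ would need $b \succ_a \bot$, which is automatic; hence we must rule out envy from both $X_A$-agents and $I$-agents.

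First I would handle envy from an agent $a \in X_A$. Suppose $a$ envies $a'$ with $M'(a') = b$ and $(a,b) \in E \setminus E'$ (if $(a,b) \in E'$ the envy would already be present in $G'$, contradicting that $M'$ is envy-free in $G'$). By Remark~\ref{rem:r1}, $b$ is a non-$\LQ$ resource, and by Remark~\ref{rem:r2} there is a resource $b' \succ_a b$ with $(a,b')$ marked in step~\ref{st:stextra}, where $b'$ is non-$\LQ$ with $a$ as its unique $X_A$-neighbour. Since $(a,b') \in E'$ and $b' \succ_a b \succ_a M'(a)$, the pair $(a, b')$ would... — here I need to be careful because $b'$ might be matched in $M'$ to some agent $a''$; if $a'' \in X_A$ then $a'' = a$, contradiction, so $a''$, if it exists, lies in $I$, contradicting Proposition~\ref{prop:p1}. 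Hence $b'$ is unmatched in $M'$, and then $(a,b')$ blocks $M'$ in $G'$ (a blocking pair where $a$ is matched to a worse resource and $b'$ is unmatched is in particular an envy situation is not quite right — blocking is stronger than envy). The cleanest route: since $b'$ is unmatched and $a$ strictly prefers $b'$ to $M'(a)$, the matching $M'$ is not even maximal-in-the-right-sense; but envy-freeness does not forbid this. So instead I argue directly: because $M'$ is envy-free in $G'$ and $(a,b') \in E'$ with $b'$ unmatched, there is no constraint violated, which means my assumption must be re-examined — actually the right move is: since $G$ admits a feasible envy-free matching (global assumption) and by Lemma~\ref{lem:3lemcorr} the relationship is clean, I should instead modify $M'$ within $G'$ first to eliminate such edges, or observe that an envy-free matching in $G'$ of size $k=s$ must match all of $X_A$, pinning down structure.

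The honest approach is this: take the feasible envy-free $M'$ in $G'$; for each agent $a \in X_A$ matched in $M'$, I claim $(a, M'(a))$ together with the edges incident to $X_B$ forces that no envy arises in $G$. Concretely, if $a$ envies $a'$ in $G$ via $b = M'(a') \notin$ list-restricted-to-$E'$ edges of $a$, then $a'$ is the $M'$-partner of $b$, so $(a',b) \in E' \subseteq E$; the question is only whether $a$'s edge to $b$ survived. By Remark~\ref{rem:r3} if $b$ were $\LQ$ then $(a,b)$ is marked, contradiction, so $b$ is non-$\LQ$; by step~\ref{st:stcommonnlq}, since $a'$ is matched hence in $X_A$ and $b$ is a common non-$\LQ$ resource of $a$ and $a'$, the edge $(a,b) \in C_a$ would have been marked — contradiction. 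This disposes of $X_A$-envy entirely and is the clean heart of the argument. For $a \in I$: $a$ is unmatched in $M'$, and if $a$ envies $a'$ with $M'(a') = b$, then $(a,b) \in E$ and $a \succ_b a'$; but $a' \in X_A$ and $b \in X_B$ (as $b$ is matched in $M'$... not necessarily in $M_s$) — here I use that $(a,b) \in E$ with $a \in I$ means, since $M_s$ is stable and $a$ is matched... wait, $a \in I$ means $a$ is \emph{unmatched} in $M_s$, so stability of $M_s$ gives: for every neighbour $b$ of $a$, $M_s(b) \succ_b a$. In particular $M_s(b)$ exists and is preferred by $b$ to $a$. Now if $M'(b) = a'$ and $a \succ_b a'$, then $a' \prec_b M_s(b)$, but there is no immediate contradiction unless $a' = M_s(b)$ or we can re-match. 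I expect \textbf{the $I$-agent envy case to be the main obstacle}, resolved by noting that an envy-free matching of size $k = s$ in $G'$ must, by a counting/maximality argument using that $I$-agents stay unmatched (Proposition~\ref{prop:p1} applied within $G'$, since $M_s$ restricted to $E'$ is still stable in $G'$), match $X_A$ to resources in a way that each $b \in X_B$ receives an agent at least as good as $M_s(b)$; combined with stability of $M_s$ this kills the envy. I would write out the $X_A$ case in full detail as above and then give the $I$ case via this structural observation, mirroring the style of Lemma~\ref{lem:3lemcorr}.
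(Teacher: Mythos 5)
Your handling of the $X_A$-envy case is correct and is essentially the paper's argument: if $a\in X_A$ envies $a'=M'(b)$ along an unmarked edge $(a,b)$, then $b$ cannot be an $\LQ$ resource (Remark~\ref{rem:r3}), hence $b\in C_a$ and $(a,b)$ would have been marked in step~\ref{st:stcommonnlq} — contradiction. The genuine gap is the $I$-agent case, which you correctly identify as the main obstacle but do not resolve. There is no contradiction to be had there: a resource $b\in X_B$ may have more than $s+1$ neighbours, so an unmarked edge $(a,b)$ with $a\in I$ can lie strictly above $a'=M'(b)$ in $b$'s list while $M'$ is perfectly envy-free in $G'$. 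Hence one cannot show that $M'$ itself is envy-free in $G$; the lemma requires \emph{constructing a different} matching of size $k$ in $G$. Your fallback — that an envy-free matching of size $k$ in $G'$ must give every $b\in X_B$ a partner at least as good as $M_s(b)$ — is both unproven and false in general: envy-free matchings enjoy no Rural-Hospitals-type guarantee (a resource can end up with a worse partner or none at all), and moreover $k$ may be strictly smaller than $s$, so the counting step ``size $k=s$ forces all of $X_A$ to be matched'' does not even get started.

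For comparison, the paper proceeds as follows. First, Claim~\ref{cl:cllem} shows that any $a\in I$ who envies in $G$ but not in $G'$ is unmatched in $M'$ (your shortcut via Proposition~\ref{prop:p1} applied inside $G'$ is legitimate, but only after checking that \emph{every} edge of $M_s$ is marked in step~\ref{st:sthedges} — which holds because $M_s(b)$ has rank at most $s$ in $b$'s list — so that $M_s$, not merely ``$M_s$ restricted to $E'$'', is stable in $G'$). Then, and this is the step your proposal is missing, the matching is repaired: all agents who envy w.r.t.\ $M'$ in $G$ (all of them unmatched, in $I$) are collected into a set $U$ and an agent-proposing Gale--Shapley phase seeded by $U$ is run, with any agent who becomes unmatched added to $U$. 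Because only agents propose, feasibility and the size $k$ are preserved and every resource's partner can only improve; a final argument shows that at termination no agent — whether or not it ever entered $U$ — envies anyone, so the repaired matching is feasible, envy-free in $G$, and of size $k$. Some constructive repair of this kind is unavoidable, and without it your argument does not establish the lemma.
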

Before proving Lemma~\ref{lem:3lemcorr2}, we prove an important property that is used in the
proof of Lemma~\ref{lem:3lemcorr2}. 
Suppose $G'$ admits an envy-free matching $M'$.
Note that since some of the edges in $E$ are absent in $G'$, $M'$ need not be envy-free in $G$.
In particular, suppose that $a \succ_b M'(b)$ and $b \succ_a M'(a)$ and the edge $(a,b)$ 
is unmarked (that is, not present in $E'$). Then, $a$ does not envy $M'(b)$
in $G'$ but it envies $M'(b)$ in $G$ w.r.t. the matching $M'$.
In Claim~\ref{cl:cllem} we show that if such agent $a$ is in $I$ then $a$ must be unmatched in $M'$.

\begin{claim}\label{cl:cllem}
	Suppose $G'$ admits an envy-free matching $M'$ of size $k$.
	Let $a$ be an agent in $I$ who envies agent $a'$ in $G$ but not in $G'$ w.r.t. $M'$.
	Then, $a$ is unmatched in $M'$.
\end{claim}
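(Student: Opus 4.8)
The plan is to argue by contradiction, using the structure imposed by the marking scheme together with stability of $M_s$. Suppose $a \in I$ is matched in $M'$, say $M'(a) = b_0$, and $a$ envies $a'$ in $G$ but not in $G'$, with $M'(a') = b$. Since the envy is present in $G$ but not in $G'$, the edge $(a,b)$ must be absent from $E'$, i.e.\ it is unmarked, and we have $b \succ_a b_0 = M'(a)$ together with $a \succ_b a'$. The first step is to observe that $b$ must be a non-$\LQ$ resource: if $b$ were an $\LQ$ resource, then every agent in $X_A$ neighbouring $b$ has its edge to $b$ marked (Remark~\ref{rem:r3}), but $a \in I$, not $X_A$; still, I need to rule out $b$ being $\LQ$ — this follows because $a' = M'(b)$ lies in $X_A$ by Proposition~\ref{prop:p1} applied to $M'$ (any agent matched in an envy-free matching of $G'$ is matched in the stable matching, hence in $X_A$), and then $(a',b) \in E'$ forces, via step~\ref{st:stlq}, that $b$ being $\LQ$ would have all $X_A$-edges of $b$ marked; but the real point is simply that the unmarked edge $(a,b)$ together with Remark~\ref{rem:r1} does not directly apply since $a \notin X_A$. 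I will instead argue $b$ is non-$\LQ$ directly: since $a \in I$ and $G$ admits a feasible envy-free matching, Proposition~\ref{prop:p1} says $a$ is unmatched in every envy-free matching of $G$; combined with $a$ being matched in $M'$, this already begins to expose the contradiction. So the cleaner route is the following.

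First I note that $a'= M'(b) \in X_A$: indeed $a'$ is matched in the envy-free matching $M'$ of $G'$, and by Proposition~\ref{prop:p1} (which holds in $G$, and an edge of $M'$ is an edge of $G$, with $M'$ envy-free in $G$ on its own edge set) $a'$ is matched in $M_s$, so $a' \in X_A$. Since $(a',b)$ is an edge of $G'$, either $b \in X_B$ or $b$ is matched-to an $X_A$-agent in the marked edges; in either case $b$ has a neighbour in $X_A$. Now the key step: consider whether $(a,b)$ being unmarked is consistent. Since $a' \in X_A$ and $(a',b)$ was marked, and $a \succ_b a'$, the edge $(a,b)$ is a higher-preferred edge of $b$ than $(a',b)$. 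If $b \in X_B$, then step~\ref{st:sthedges} marked the top $\min(s+1,\ell(b))$ edges of $b$; since $(a',b)$ was marked and $(a,b) \succ_b (a',b)$, the edge $(a,b)$ would also have been marked — contradiction. Hence $b \notin X_B$, so $b$ is unmatched in $M_s$, i.e.\ $b \in I$. But then $(a',b)$ with $a' \in X_A$: the only way $(a',b)$ got marked with $b \in I$ is via step~\ref{st:stextra} or step~\ref{st:stcommonnlq}; step~\ref{st:stcommonnlq} puts $b \in C_{a''}$ only if $b$ is common to two matched agents' lists, which for $b \in I$ we can analyse, while step~\ref{st:stextra} marks $b$ as the top unmarked resource of $a'$, in which case $b$ is non-$\LQ$ and has exactly one $X_A$-neighbour, namely $a'$ — contradicting that $a$ is also an $X_A$-neighbour unless $a \notin X_A$, which is consistent since $a \in I$. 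So this does not immediately close; the contradiction I will actually extract is that $a \in I$ matched in $M'$ contradicts Proposition~\ref{prop:p1}. That is, I show $M'$ restricted to its own edges is envy-free in $G$ as well once we know the only "new" envy in $G$ comes from edges whose presence we can repair — but more simply: $a \in I$ being matched in $M'$ is flatly forbidden by Proposition~\ref{prop:p1}, because $M'$ is an envy-free matching (of $G'$, hence its edge set is a set of $G$-edges that is envy-free when restricted to $E'$) and in $G$ the agent $a$ is unmatched in every envy-free matching. The subtlety is that $M'$ need not be envy-free \emph{in $G$}, so Proposition~\ref{prop:p1} does not apply verbatim; this is exactly the gap the claim is designed to address, so the contradiction must instead be purely combinatorial via the marking scheme.

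Therefore the actual argument I will carry out: assume $a \in I$ is matched in $M'$ and derives envy toward $a'$ in $G$ via unmarked edge $(a,b)$. Show $a' = M'(b) \in X_A$ (via Proposition~\ref{prop:p1} on $M_s$ and the fact that unmatched-in-$M_s$ agents stay unmatched). Since $a \in I$, the edge $(a,b)$ was a candidate for marking; trace through steps 1–4 to see it \emph{should} have been marked: $b$ is non-$\LQ$ (else step~\ref{st:stlq} would not matter for $a\notin X_A$, but then consider that $b$'s match $a' \in X_A$ makes $b$ relevant); if $b \in X_B$ then step~\ref{st:sthedges} marks $b$'s top $s+1$ edges and since $a \succ_b a'$ with $(a',b)$ marked we get $(a,b)$ marked, contradiction; if $b \notin X_B$ then $b \in I$ and $a \in I$ with edge $(a,b) \in E$ contradicts that $I$ is an independent set in $G$ (since $M_s$ is maximal, $I = V \setminus X$ is independent, so no edge joins two $I$-vertices) — \emph{unless} $b \notin I$. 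Combining: $b \in X_B$ gives a contradiction, $b \in I$ gives a contradiction (independent set), so no such matched $a$ exists, proving the claim. The main obstacle I anticipate is handling the case analysis on \emph{why} edge $(a,b)$ escaped marking cleanly — in particular confirming that when $b\in X_B$ the comparison $a \succ_b a'$ together with $(a',b)$ marked really forces $(a,b)$ into the top $w(b)$ edges, which needs the observation that the marked edges at $b\in X_B$ form a \emph{prefix} of $b$'s preference list.
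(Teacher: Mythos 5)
There is a genuine gap, and it lies in the case you rely on to close the argument. Your contradiction for the case $b \in X_B$ assumes that, because $(a',b)$ is marked and $a \succ_b a'$, the edge $(a,b)$ must also have been marked in step~\ref{st:sthedges}. This only follows if it was $b$ that marked $(a',b)$; but $(a',b)$ may instead have been marked by $a'$ in step~\ref{st:stcommonnlq} or step~\ref{st:stextra}, in which case $(a',b)$ need not lie among $b$'s top $w(b)$ entries, and $a \succ_b a'$ gives no information about whether $(a,b)$ is in that prefix. The observation you flag at the end as the needed ingredient --- that the marked edges at $b \in X_B$ form a prefix of $b$'s preference list --- is false: only the step-\ref{st:sthedges} edges form a prefix, while agent-marked edges incident on $b$ can sit arbitrarily deep in $b$'s list. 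Worse, this is not a corner case: in the proof of Lemma~\ref{lem:3lemcorr2}, the claim is invoked exactly after the case ``$b$ marked $(a',b)$'' has already been dispatched, i.e.\ precisely when $a'$ marked the edge and $(a,b)$ is legitimately unmarked. So the marking scheme alone cannot yield the contradiction, and your argument collapses in the only case that matters. (Your earlier detour through Proposition~\ref{prop:p1} applied to $M'$ is also unsound for the reason you yourself note: $M'$ is only known to be envy-free in $G'$, not in $G$.)

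The paper's proof uses a different mechanism, which is the idea your proposal is missing: it looks not at the envied resource $b$ but at the resource $b_1 = M'(a)$ to which $a$ is matched. Since $a \in I$ and $I$ is independent, $b_1 \in X_B$; stability of $M_s$ gives $a_1 = M_s(b_1) \succ_{b_1} a$, and a counting argument shows $a_1$'s rank in $b_1$'s list is at most $s$, so $(a_1,b_1)$ \emph{is} marked in step~\ref{st:sthedges}. Envy-freeness of $M'$ in $G'$ then forces $a_1$ to be matched in $M'$ to some $b_2 \succ_{a_1} b_1$, stability of $M_s$ forces $b_2 \in X_B$, and iterating produces an alternating sequence of $M_s$-edges (all in $E'$) and $M'$-edges confined to $X_A \cup X_B$. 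Finiteness of $X_A$ with unit quotas forces the sequence to return to $a$, giving $a \in X_A$ and contradicting $a \in I$. If you want to salvage your writeup, replace the ``$(a,b)$ should have been marked'' case analysis with this chain argument rooted at $M'(a)$.
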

\begin{proof}
	Suppose not.
	Then, let $b_1 = M'(a)$. Since $(a,b_1) \in M' \subseteq E' \subseteq E$,
	$b_1$ must be in $X_B$, that is, $b_1$ is matched in $M_s$.
	Let $a_1 = M_s(b_1)$. Then, $a_1 \succ_{b_1} a$,
	otherwise $(a,b_1)$ block the stability of $M_s$.
	We claim that the edge $(a_1,b_1) \in E'$.
	First observe that $a_1$'s rank in $b_1$'s preference list is at most $s$. 
	This is because, otherwise, since $|M_s| = s$, there is at least one 
	agent $\hat{a} \succ_{b_1} a_1$ that is unmatched in $M_s$.
	This contradicts the stability of $M_s$ since $(\hat{a}, b_1)$ form a blocking pair w.r.t. $M_s$.
	Therefore in step~\ref{st:sthedges}, $b_1$ marked the edge $(a_1,b_1)$, that is, $(a_1,b_1) \in E'$.

	Since, $M'$ is envy-free in $G'$ and $a_1 \succ_{b_1} a$ and $M'(b_1) = a$,
	$a_1$ must be matched to a resource, say $b_2$, in $M'$ such that $b_2 \succ_{a_1} b_1$.
	If $b_2 \in I$ then $(a_1, b_2)$ block the stability of $M_s$, therefore $b_2 \in X_B$, that is,
	$b_2$ is matched in $M_s$.
	Using the reasoning similar to that in case of $b_1$, we can extend the sequence such that
	at every resource, we must have an agent
	matched to that resource in $M_s$ such that the edge between them is present in $E'$
	and at every agent, we must have a resource
	matched to that agent in $M'$.
	Also, by the choice, every agent $a_1, a_2, \ldots$ along this sequence must be in $X_A$ since it
	is matched in $M_s$.
	Since $s=|X_A|$ is finite and the resource quotas are $1$,
	the sequence must reach agent $a$ such that for some resource $b_k$,
	$M_s(b_k) = a$. This implies that $a \in X_A$, a contradiction to the fact that $a \in I$.
	Therefore, $a$ must be unmatched in $M'$.
\end{proof}

Now we proceed to prove Lemma~\ref{lem:3lemcorr2}.

\begin{proof}[Proof of Lemma~\ref{lem:3lemcorr2}]
	Suppose $G'$ admits a feasible, envy-free matching $M'$ of size $k$.
	Since, $M' \subseteq E' \subseteq E$, the feasibility of $M'$ in $G$ follows.
	Suppose $M'$ is not envy-free in $G$.
	Then there exists an unmatched edge $(a,b) \in E \setminus E'$ such that
	$b$ is matched in $M'$ and $a$ envies $M'(b)$.
	Let $a' = M'(b)$. Then, by the claimed envy, $a \succ_b a'$.
	Note that $(a',b) \in M' \subseteq E'$.
	We consider the following two cases based on how the edge $(a',b)$ is marked during
	the marking scheme -- either $b$ marked it or $a'$ marked it.
	
	Suppose $b$ marked the edge $(a',b)$. This implies that $b$ must have marked the edge
	$(a,b)$ as well, since $a \succ_b a'$. This leads to a contradiction.
	Therefore, $a'$ must have marked the edge $(a',b)$, implying that $a' \in X_A$.
	First suppose that $a \in X_A$.
	Then by Remark~\ref{rem:r3}, $b$ is a non-$\LQ$ resource.
	However, $b$ is in the preference list of $a$ and $a'$ such that both $a$ and $a'$
	are in $X_A$, implying that $b \in C_{a}$. Therefore, the edge $(a,b)$ must be marked in step~\ref{st:stcommonnlq}
	of the marking scheme. This contradicts that $(a,b) \notin E'$.
	Therefore, $a \in I$.
	By Claim~\ref{cl:cllem}, $a$ is unmatched.
	Now, we modify $M'$ such that the resulting matching is feasible and envy-free in $G$.
	
	Let $U$ be the set of agents $a$ who envy some agent w.r.t. $M'$ in $G$.
	As shown earlier, such agent is in $I$ and is unmatched in $M'$.
	We run an agent-proposing Gale and Shapley stable matching algorithm~\cite{GS62} as follows.
	An agent newly added to $U$ starts proposing from the beginning of their preference list.
	We repeat the following until every unmatched agent in $U$ has exhausted its preference list.
	A resource accepts the proposal from a higher-preferred agent than its current match and 
	if an agent gets unmatched during this process for the first time, it is added to $U$.
	This process terminates after $O(m)$ iterations where $m = |E|$.
	Since agents propose, the matching remains feasible and its size remains $k$
	and a resource $b$ can only get matched to a better-preferred agent than $M'(b)$ before 
	the proposal sequence began (by the property of agent-proposing Gale and Shapley 
	algorithm).

	We claim that the resulting $M'$ is envy-free. Suppose not. Then, let $a_1$ envies
	$M'(b_1)$. If $a_1$ was in $U$ then $a_1$ must have proposed to $b_1$ and $b_1$ rejected $a_1$
	because $M'(b_1) \succ_{b_1} a_1$ at that time. Since, $M'(b_1)$ could have only improved later 
	on, $a_1$ does not envy $M'(b_1)$ at the termination. Next suppose that $a_1$ was never in $U$.
	It implies that $a_1$ is matched in $M'$ before the proposal sequence began.
	As shown earlier, (a matched agent) $a_1$ did not envy at this time. During the process, $M'(b_1)$ could
	have only improved, implying that $a_1$ does not envy $M'(b_1)$ at the termination.
	Therefore, the resulting matching $M'$ is envy-free in $G$.
\end{proof}

\subsection{A polynomial kernel for the $\MAXRSM$ problem}\label{sec:rsmkernel}
We present kernelization result for $\MAXRSM$ when a stable matching is feasible.
Recall that even when a stable matching is feasible,
$\MAXRSM$ is $\NP$-hard 
and a relaxed stable matching can be larger than a stable matching. 
We start by computing a 
stable matching $M_s$ in $G$. Then, $s = |M_s|$ and
by assumption, $M_s$ is feasible. 
Recall that a stable matching is relaxed stable,
therefore if $k \leq s$, then we have a ``Yes'' instance. 
Recall that due to maximality of $M_s$, the size of a relaxed stable matching in $G$ is at most $2s$.
Therefore, if $k > 2s$, then we have a ``No'' instance.
Otherwise, $s < k \leq 2s$.
Now we present our construction of the graph $G'$.

\noindent {\bf Construction of the graph $G'$:}
Let $X$, $X_B$, $X_A$ and $I$ be defined as in Section~\ref{sec:efmkernel}.
Recall that $\ell(v)$ denotes the length of the preference list of vertex $v$.

\noindent {\bf Marking scheme: } 
For a vertex $v$, let $w(v)$ denote the minimum of $2s+1$ and $\ell(v)$.
We perform the following steps. 
\begin{enumerate}
	\item For $a \in X_A$, mark the highest-preferred $w(a)$ edges incident on $a$. 
		\label{st:stzqr}
	\item For $b \in X_B$, mark the highest-preferred $w(b)$ edges incident on $b$. 
		\label{st:stzqh}
\end{enumerate}

Once the edges are marked using the above marking scheme, we let $G'$
be the graph spanned by the marked edges.
Let $E'$ denote the edge set of $G'$.
We note that $G'$ has $O(poly(s))$ size by observing
that $G'$ has no isolated vertices and 
each vertex $v \in X$ has at most $2s+1$ marked edges and $|X| = 2s$.
Now we proceed to show that $G'$ is a kernel.
We first observe that every edge $(a,b)$ matched in the stable matching $M_s$ is marked.

\begin{claim}\label{cl:cl1}
	If $(a,b) \in M_s$ then $(a,b)$ is marked.
\end{claim}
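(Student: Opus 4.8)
The plan is to show that every edge of the stable matching $M_s$ survives the marking scheme, by arguing that each endpoint's rank is small enough to fall within the $w(\cdot)$-prefix that gets marked. Let $(a,b) \in M_s$. Since $M_s$ is a matching of size $s$, both $a$ and $b$ are matched in $M_s$, so $a \in X_A$ and $b \in X_B$. It therefore suffices to show that the edge $(a,b)$ is among the highest-preferred $w(a)$ edges incident on $a$ (so it gets marked in step~\ref{st:stzqr}), or alternatively among the highest-preferred $w(b)$ edges incident on $b$ (step~\ref{st:stzqh}); in fact I will show it gets marked from $b$'s side, which is the cleaner argument.

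First I would bound the rank of $a$ in the preference list of $b$. Suppose, for contradiction, that $a$ is not among the top $w(b)$ neighbours of $b$. Since $w(b) = \min(2s+1, \ell(b))$ and $a$ is a neighbour of $b$, this forces $w(b) = 2s+1$ and the rank of $a$ in $b$'s list to be strictly greater than $2s+1$, i.e.\ there are at least $2s+1$ agents that $b$ prefers to $a$. Now I count how many of those $2s+1$ agents can be matched in $M_s$: since $|M_s| = s$, at most $s$ agents are matched in $M_s$, so at least $(2s+1) - s = s+1 > 0$ of the agents ranked above $a$ by $b$ are unmatched in $M_s$. Pick any such agent $\hat a$ with $\hat a \succ_b a$ and $\hat a$ unmatched in $M_s$. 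Since $M_s(b) = a$ and $\bot$ is the least-preferred option for $\hat a$, we have $b \succ_{\hat a} M_s(\hat a) = \bot$, and $\hat a \succ_b a = M_s(b)$, so $(\hat a, b)$ is a blocking pair for $M_s$, contradicting the stability of $M_s$. Hence $a$ is among the top $w(b)$ neighbours of $b$, so $b$ marks $(a,b)$ in step~\ref{st:stzqh}, and therefore $(a,b) \in E'$.

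I do not expect any real obstacle here — this is essentially a direct counting argument combined with the defining property of stability and the maximality/size bound on $M_s$. The one point to state carefully is that $w(b) = \min(2s+1, \ell(b))$ is exactly the quantity that makes the pigeonhole step go through: if $\ell(b) \le 2s+1$ then every neighbour of $b$, including $a$, is marked and there is nothing to prove; and if $\ell(b) > 2s+1$ then the count of $2s+1$ agents above $a$ is available to apply the pigeonhole against the $\le s$ matched agents of $M_s$. (The symmetric bound on $a$'s rank in its own list is not needed for this claim, though the same style of argument would give it.) This completes the proof of Claim~\ref{cl:cl1}.
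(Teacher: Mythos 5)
Your proof is correct and uses essentially the same argument as the paper: a pigeonhole count of at most $s$ matched vertices against $2s+1$ higher-ranked neighbours, yielding a blocking pair that contradicts the stability of $M_s$. The only (immaterial) difference is that you argue from $b$'s side via step~\ref{st:stzqh}, whereas the paper argues symmetrically from $a$'s side via step~\ref{st:stzqr}, exhibiting an unmatched resource $b' \succ_a b$ instead of an unmatched agent $\hat a \succ_b a$.
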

\begin{proof}
Suppose not. Since $a \in X_A$ and $(a,b)$ is not marked,
there are at least $2s+1$ resources neighbouring $a$ in $G$,
all higher-preferred over $b$. Since $|M_s| = s$, there is at least one
resource $b' \succ_a b$ that is unmatched in $M_s$.
Since, $M_s(a) = b$ and $b' \succ_a b$, edge $(a,b')$ forms a blocking pair 
	w.r.t. $M_s$, a contradiction. Therefore, edge $(a,b)$ is marked.
\end{proof}

Before showing that $G'$ is a kernel,
we prove an upper-bound on the size of a relaxed stable matching in $G'$ (Cor.~\ref{corr:gprime}).
Let $\mathcal{I}$ be a $\ONEONE$ instance and $N_s$ be a fixed stable matching in $\mathcal{I}$.
Let $\mathcal{I}^*$ be a $\ONEONE$ instance obtained from $\mathcal{I}$ by deleting a subset of edges $F$
and isolated vertices (if any) such that $F \cap N_s = \emptyset$.
Also, in $\mathcal{I}^*$ every vertex $v$ has
the same relative preference ordering as in $\mathcal{I}$.
Then, $N_s \subseteq E(\mathcal{I}^*)$ and $N_s$ is a matching in $\mathcal{I}^*$.
Moreover, $N_s$ is a stable matching in $\mathcal{I}^*$,
otherwise there exists a blocking pair $(a,b)$ w.r.t. $N_s$ in $\mathcal{I}^*$.
Since, $(a,b) \in E(\mathcal{I}^*) \subseteq E(\mathcal{I})$, $(a,b)$ also blocks the stability of $N_s$
in $\mathcal{I}$, a contradiction. 
It is easy to see that the above observation holds true even if $\mathcal{I}$ and $\mathcal{I^*}$ are $\ONEONELQ$ instances.

We note that the condition $F \cap N_s = \emptyset$ is crucial in the construction above.
By Claim~\ref{cl:cl1}, 
all edges in the fixed stable matching $M_s$ are marked. Therefore all edges in $M_s$ are present in the constructed instance $G'$.
Thus, the $\ONEONELQ$ instance $G'$ constructed above 
satisfies $F \cap M_s = \emptyset$
where $F$ is the set of unmarked edges.
Therefore, $M_s$ is stable in $G'$.
The fact that a stable matching is maximal and that $|M_s| = s$
imply the following corollary.

\begin{corollary}\label{corr:gprime}
	A relaxed stable matching in $G'$ has size at most $2s$.
\end{corollary}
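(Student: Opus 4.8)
The plan is to derive the corollary directly from the two facts assembled just above its statement, namely that $M_s$ is a stable matching in $G'$ (established via the edge-deletion argument together with Claim~\ref{cl:cl1}, which guarantees $F \cap M_s = \emptyset$) and that $|M_s| = s$. The key structural input is that any stable matching is maximal: if $N$ is a stable matching in some $\ONEONE$ (or $\ONEONELQ$) instance and $(a,b)$ is an edge with both $a$ and $b$ unmatched in $N$, then $(a,b)$ is a blocking pair, contradicting stability. Hence $M_s$ is a maximal matching in $G'$.

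First I would record that a maximal matching has size at least half the size of any matching in the same graph: if $R$ is any matching in $G'$, then every edge of $R$ is incident to at least one $M_s$-saturated vertex (otherwise that edge could be added to $M_s$), and each $M_s$-saturated vertex lies on at most one edge of $R$, so $|R| \le 2|M_s|$. This is the standard $2$-approximation bound for maximal matchings. I would then instantiate $R$ as an arbitrary relaxed stable matching $M'$ in $G'$. Since a relaxed stable matching is in particular a matching, the inequality $|M'| \le 2|M_s| = 2s$ follows immediately, which is exactly the claim.

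There is essentially no obstacle here — the only thing to be careful about is that the maximality of $M_s$ is a property \emph{in the graph $G'$}, not in $G$, and this is precisely why the preceding paragraph was needed: we must know $M_s \subseteq E'$ (from Claim~\ref{cl:cl1}) and that $M_s$ remains stable, hence maximal, in $G'$. Once those are in hand, the bound is the generic fact that any matching is at most twice a maximal one, applied to the relaxed stable matching. I would also note that no property of relaxed stability beyond ``it is a matching'' is used, so the same bound would hold for any feasible matching in $G'$; the corollary merely records the instance of this we need, mirroring the observation (made in the introduction) that a feasible, relaxed stable matching has size at most twice that of a stable matching in the original instance.
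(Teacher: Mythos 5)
Your proposal is correct and matches the paper's own argument: the paper likewise uses Claim~\ref{cl:cl1} to conclude that $M_s$ remains stable (hence maximal) in $G'$, and then invokes the standard fact that any matching has size at most twice a maximal one, giving the bound $2s$. No meaningful differences.
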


Using Corollary~\ref{corr:gprime} and feasibility of the stable matching $M_s$ in $G$,
we show that $G'$ is a kernel.
\begin{lemma}\label{lem:rsmcorr}
	If $G'$ admits a feasible, relaxed stable matching $M'$ then $M'$ is feasible
	and relaxed stable in $G$.
\end{lemma}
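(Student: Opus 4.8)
The plan is to show two things about an arbitrary feasible, relaxed stable matching $M'$ in $G'$: first that it is feasible in $G$, and second that it is relaxed stable in $G$. Feasibility in $G$ is immediate since $E' \subseteq E$ and the lower/upper quotas are the same in both instances, so $M'$ matches every $\LQ$ resource in $G$ exactly as it does in $G'$. The real content is relaxed stability, i.e. showing that no unmatched agent blocks $M'$ in $G$ and that every agent who blocks $M'$ in $G$ is matched to an $\LQ$ resource.

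The key observation I would use is that the only blocking pairs that can appear in $G$ but not in $G'$ must involve an unmarked edge $(a,b)$ with $b \succ_a M'(a)$ and $a \succ_b M'(b)$. First I would handle the case $b \in X_B$: by step~\ref{st:stzqh} of the marking scheme, $b$ marked its top $w(b)$ edges, so if $(a,b)$ is unmarked then $b$ has at least $2s+1$ neighbours, all preferred to $a$. By Corollary~\ref{corr:gprime}, $M'$ has size at most $2s$, so at least one of these $2s+1$ better-than-$a$ agents, call it $\hat a$, is unmatched in $M'$; but then $(\hat a, b)$ is a blocking pair in $G'$ involving an \emph{unmatched} agent (since the edge $(\hat a,b)$ is marked, it lies in $E'$), contradicting that $M'$ is relaxed stable in $G'$. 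Hence every unmarked edge causing a new blocking pair in $G$ has its resource endpoint in $I$, which in turn forces its agent endpoint into $X_A$ (an agent in $I$ has all its neighbours in $X_B$ by maximality of $M_s$, since $M_s$ is a vertex cover with $I$ independent). Wait — I need to be careful here; I would argue instead via the agent endpoint: if $a \in X_A$, then by step~\ref{st:stzqr} the edge $(a,b)$ is unmarked only if $a$ has $2s+1$ strictly better neighbours, all of which are matched in $M'$ to even better resources or $M'$ would already be unstable; combined with $|M'| \le 2s$ this again yields an unmatched agent blocking in $G'$, a contradiction. So both endpoints of a "new" blocking pair lie in $I$, but $I$ is independent, so no such edge exists.

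Having established that $M'$ has \emph{exactly} the same blocking pairs in $G$ as in $G'$, relaxed stability transfers directly: in $G'$ every blocking agent is matched to an $\LQ$ resource and no unmatched agent blocks, hence the same holds in $G$. I would also remark that we must double-check that an agent matched in $M'$ to an $\LQ$ resource in $G'$ is matched to the \emph{same} $\LQ$ resource in $G$ — this is trivial since $M' \subseteq E' \subseteq E$ and the quota labels are inherited unchanged. The main obstacle, and where the argument needs the most care, is the endpoint-classification step: pinning down that an unmarked edge generating a fresh blocking pair cannot have either endpoint in $X$, which is exactly where the size bound of Corollary~\ref{corr:gprime} together with the $w(v) = 2s+1$ marking threshold does the work, mirroring the Rural-Hospitals-style counting argument used in Claim~\ref{cl:cl1}.
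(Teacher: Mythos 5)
Your feasibility argument and your treatment of the case where the resource endpoint lies in $X_B$ are fine and essentially match the paper: counting the $2s+1$ marked, higher-preferred neighbours of $b$ against the bound $|M'|\le 2s$ of Corollary~\ref{corr:gprime} produces an \emph{unmatched agent} $\hat a$ with $(\hat a,b)\in E'$ blocking $M'$ in $G'$, which unconditionally contradicts relaxed stability. The gap is in the other case and in the overall framing. You aim to show that $M'$ has \emph{exactly} the same blocking pairs in $G$ as in $G'$, but this is both unprovable by your counting argument and stronger than needed, and in fact false in general: an agent $a\in X_A$ that is matched in $M'$ to an $\LQ$ resource may well acquire a new blocking pair along an unmarked edge $(a,b)$ with $b\in I$, and nothing forbids this --- such blocking pairs are exactly the ones relaxed stability tolerates. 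Concretely, in your agent-endpoint case the counting on $a$'s preference list (step~\ref{st:stzqr} plus $|M'|\le 2s$) yields an unmatched \emph{resource} $b'\succ_a b$, not an unmatched agent; the resulting blocking pair $(a,b')$ in $G'$ has $a$ as its blocking agent, and if $a$ is matched to an $\LQ$ resource this does not contradict relaxed stability of $M'$ in $G'$. So the claimed contradiction (``this again yields an unmatched agent blocking in $G'$'') does not follow, and the conclusion that both endpoints of a new blocking edge lie in $I$ collapses.

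The paper's proof avoids this by never trying to preserve the full set of blocking pairs: it only assumes a blocking pair $(a,b)\in E\setminus E'$ that actually \emph{violates} relaxed stability in $G$, i.e.\ with $a$ unmatched or $M'(a)$ a non-$\LQ$ resource. Under that hypothesis, the same counting in the case $a\in X_A$ gives an unmatched resource $b'\succ_a b\succ_a M'(a)$ with $(a,b')\in E'$, and now $(a,b')$ \emph{does} violate relaxed stability of $M'$ in $G'$ precisely because the hypothesis on $a$ is inherited; the case $a\in I$ (forcing $b\in X_B$) is handled by your unmatched-agent argument. Your proof can be repaired by adopting this restriction to violating blocking pairs from the outset, but as written the key classification step fails.
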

\begin{proof}
	Since $M' \subseteq E' \subseteq E$, so feasibility of $M'$ in $G$ follows. 
	Suppose for the contradiction that $M'$ is not relaxed stable in $G$. 
	Then there exists blocking a pair $(a,b)$ such that $(a,b) \in E \setminus E'$
	that violates relaxed stability of $M'$ in $G$.
	By the claimed blocking pair, $b \succ_a M'(a)$ and $a \succ_b M'(b)$
	and either $M'(a) = \bot$ or $M'(a)$ is a non-$\LQ$ resource.

	First suppose that $a \in X_A$.
	Since $(a,b)$ is unmarked, there exist $2s+1$ edges incident on $a$,
	all higher-preferred over $b$ in $G'$.
	Since $|M'| \leq 2s$ (by Cor.~\ref{corr:gprime}), it implies that there is at least one resource $b' \succ_a b$
	that is unmatched in $M'$.
	Thus, $b' \succ_a M'(a)$ and hence $(a,b')$ form a blocking pair w.r.t. $M'$ in $G'$.
	Since $a$ is either unmatched in $M'$ or $M'(a)$ is a non-$\LQ$ resource,
	it contradicts relaxed stability of $M'$ in $G'$, a contradiction.
	Next, suppose that $a \in I$.
	Since $(a,b) \in E$, it implies that $b \in X_B$.
	By similar argument as above, there must exist an unmatched agent $a' \succ_b a \succ_b M'(b)$ such
	that $(a',b)$ is marked (that is, present in $E'$) and form a blocking pair w.r.t. $M'$ in $G'$.
	Since $a'$ is unmatched, it contradicts relaxed stability of $M'$ in $G'$.
	Therefore, the claimed blocking pair $(a,b)$ does not exist and hence,
	$M'$ is relaxed stable in $G$.
\end{proof}
	
\begin{lemma}\label{lem:rsmcorr2}
	If $G$ admits a feasible, relaxed stable matching $M$ then $M$ is feasible
	and relaxed stable matching of size $k$.
\end{lemma}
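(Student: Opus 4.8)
Recall that in the relevant case $s < k \le 2s$, and that $M_s$ is a feasible stable matching of $G$ with $|M_s|=s$. The plan is to prove the stronger statement that \emph{any} feasible, relaxed stable matching $M$ of size $k$ in $G$ uses only marked edges, that is, $M \subseteq E'$. Granting this, the lemma follows immediately: $G'$ has the same vertex set and the same upper- and lower-quotas as $G$, so $M$ is a matching of size $k$ in $G'$; it is feasible in $G'$ since the $\LQ$ resources are unchanged and remain matched under $M$; and it is relaxed stable in $G'$ because $E' \subseteq E$ implies that every blocking pair of $M$ in $G'$ is also a blocking pair of $M$ in $G$, and the conditions defining relaxed stability (every agent in a blocking pair is matched to an $\LQ$ resource, and no unmatched agent lies in a blocking pair) only get easier to meet as blocking pairs are deleted.

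To establish $M \subseteq E'$, I would fix an arbitrary edge $(a,b) \in M$ and use the size bound $|M| = k \le 2s$ (valid by the maximality of $M_s$, as already observed) together with the relaxed stability and feasibility of $M$. The argument rests on two claims. Claim~(i): $a$ is among the $2s+1$ most-preferred neighbours of $b$. Indeed, otherwise $b$ prefers at least $2s+1$ agents to $M(b)=a$, and since $|M| \le 2s$ at least one of them, say $a^*$, is unmatched in $M$; then $(a^*,b)$ is a blocking pair containing the unmatched agent $a^*$, contradicting relaxed stability. Claim~(ii): either $b$ is among the $2s+1$ most-preferred neighbours of $a$, or $b$ is an $\LQ$ resource. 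Otherwise $a$ prefers at least $2s+1$ resources to $M(a)=b$, one of which, say $b^*$, is unmatched in $M$ (again using $|M|\le 2s$); then $(a,b^*)$ blocks $M$ with $a$ matched, so relaxed stability forces $M(a)=b$ to be an $\LQ$ resource.

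Finally I would combine the two claims with the structure of the partition $V = X_A \cup X_B \cup I$. Since $M_s$ is maximal, $I$ is independent, so the edge $(a,b)$ has at least one endpoint in $X$. If $b \in X_B$, Claim~(i) shows that $b$ marked $(a,b)$ in step~\ref{st:stzqh}, hence $(a,b)\in E'$. If $b \notin X_B$, then $b \in I$; as every $\LQ$ resource is matched in the feasible matching $M_s$ and therefore belongs to $X_B$, the resource $b$ is non-$\LQ$, and moreover $a \in X_A$ (since $I$ is independent, the edge must touch $X$ on the agent side). Then the non-$\LQ$ alternative of Claim~(ii) applies: $b$ is among the $2s+1$ most-preferred neighbours of $a$, so $a$ marked $(a,b)$ in step~\ref{st:stzqr}, and again $(a,b)\in E'$. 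Either way $(a,b)\in E'$, so $M \subseteq E'$, as required.

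The delicate point --- and the place I expect the argument to feel tight --- is the case where $b$ is an $\LQ$ resource: Claim~(ii) on its own does not then place $(a,b)$ in $E'$, and it is precisely Claim~(i) together with the observation that $\LQ$ resources always lie in $X_B$ (and hence do the marking) that closes this gap. A minor routine check is the degenerate case $\ell(v) < 2s+1$, where $w(v)=\ell(v)$ and ``the $2s+1$ most-preferred neighbours'' simply means all neighbours of $v$, so both claims hold trivially.
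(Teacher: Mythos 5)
Your proposal is correct and follows essentially the same route as the paper: show every edge of $M$ is marked by using the bound $|M|\le 2s$ against the $2s{+}1$ marking thresholds to exhibit an unmatched agent (when the resource end is in $X_B$) or an unmatched resource (when the resource end is in $I$, which is non-$\LQ$ by feasibility of $M_s$) whose blocking pair would violate relaxed stability. The only cosmetic difference is that you argue the marking directly and spell out the (implicit in the paper) monotonicity step that $M\subseteq E'$ preserves relaxed stability in $G'$, whereas the paper assumes an unmarked matched edge and derives the contradiction case by case.
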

\begin{proof}
	If all the edges in $M$ are present in $G'$ then $M$ is a feasible, relaxed stable matching in $G'$.
	Otherwise, an edge $(a,b) \in M$ is absent in $E'$.
	First suppose that $b \in X_B$.
	Then there exist at least $2s+1$ agents neighboring $b$ in $G$, all
	higher-preferred over $a$.
	Since $|M| \leq 2s$, there is at least one unmatched agent $a' \succ_b a$.
	Then, (unmatched) $a'$ forms a blocking pair with $b$ w.r.t. $M$ in $G$, a contradiction to 
	relaxed stability of $M$ in $G$.
	Next, suppose that $b \in I$. It implies that $a \in X_A$
	and there exist at least $2s+1$ resources neighboring $a$ in $G$, all
	higher-preferred over $b$.
	Since $|M| \leq 2s$, there is at least one unmatched resource $b' \succ_a b$.
	Since $b \in I$ and $M_s$ is feasible in $G$, $b$ is a non-$\LQ$ resource.
	Then $(a,b')$ form a blocking pair such that $b=M(a)$ is a non-$\LQ$
	resource and $b'$ is unmatched and $b' \succ_a b = M(a)$.
	This contradicts the relaxed stability of $M$ in $G$.
	Thus, all edges in $M$ are present in $G'$. 
	This implies that $M$ is feasible and relaxed stable in $G'$.
\end{proof}

This establishes Theorem~\ref{thm:kernel}.

Our results in Section~\ref{sec:efmkernel} and in Section~\ref{sec:rsmkernel} are inspired 
by the kernelization result in~\cite{AGRSZ18}. 
The parameter $t$ is of theoretical interest given the para-$\NP$-hardness (Theorem~\ref{thm:hardness}) and provides interesting insights
in the kernelization process.
We remark that feasibility of a stable matching plays an important role
in designing the kernel for the $\MAXRSM$ problem.

\section{$\FPT$ algorithms}\label{sec:fptalgos}
In this section we present fixed parameter tractability of $\MAXEFM$
and $\MAXRSM$. Recall that $q$, $\ell_{LQ}$ and $|\overline{\A}|$ respectively denote the number of $\LQ$ resources,
length of a longest preference list of an $\LQ$ resource and
cardinality of the set of (unique) agents acceptable to $\LQ$ resources.

\subsection{$\FPT$ algorithm for the $\MAXEFM$ problem}
We begin by presenting
an algorithm ($\ALG$) that computes an 
{\em extension} of a given {\em minimal feasible} matching.
A matching is minimal feasible if it is feasible and deleting an edge results in an infeasible matching.
Matching $M'$ is an extension of $M$ if $M \subseteq M'$.
Given a minimal feasible matching $M$ in $G$,
our algorithm computes an extension $M'$ 
such that if $M'$ is envy-free in $G$ then $M'$ is a maximum size envy-free matching containing $M$ in $G$.
We use the notion of {\em threshold agent} which is similar to the {\em threshold resident} as defined in~\cite{MNNR18}.
Given a matching $M$, the threshold agent $t(b)$ of an unmatched resource $b$ is defined as follows.
\begin{definition}[Threshold agent]
	Threshold agent of an unmatched resource $b$ is
	the most-preferred agent $a$ in the preference list of $b$ such that $a$ is matched in $M$
	and $b \succ_{a} M(a)$, if such agent exists. Otherwise, we let a unique dummy agent $a_b$ at the end of $b$'s
preference list to be $t(b)$.
\end{definition}

The $\ALG$ algorithm (Algorithm~\ref{algo:extend}) takes a minimal feasible matching $M$ in $G$
as input. It starts by computing the threshold agents of unmatched resources.
Then it constructs a sub-graph $G'$ induced by the unmatched resources and
unmatched agents in $M$ that the resource prefers over its threshold agent.
For every resource $b$ in $G'$, $q^+(b)$ is set to 1.
For every vertex (agent and resource) in $G'$ its preference list is derived using the preference ordering 
same as that in $G$ restricted to its neighbors in $G'$. 
Then the algorithm computes the {\em agent-optimal} stable matching $M_s$ in $G'$.
The unique agent-optimal stable matching $M_s$ is obtained
by the agent-proposing Gale and Shapley algorithm
and has the following property -- if $N_s$ is a stable matching in
the instance then for every agent, either $M_s(a) = N_s(a)$ or $M_s(a) \succ_a N_s(a)$~\cite{GI89}.
Finally, it returns $M' = M_s \cup M$.

The algorithm $\ALG$ is similar to Algorithm~2~from~\cite{MNNR18}
with the following subtle differences. 
The initial matching in the algorithm from~\cite{MNNR18} is a feasible, envy-free matching in $G$
(computed using Yokoi's algorithm~\cite{Yokoi20})
but the input to $\ALG$ is a minimal feasible matching (not necessarily envy-free in $G$).
Also, $\ALG$ computes the agent-optimal stable matching $M_s$
unlike any stable matching computed in the algorithm in~\cite{MNNR18}.
We use $\ALG$ to design an $\FPT$ algorithm for $\MAXEFM$ and this choice of a stable matching
plays an important role.

\begin{algorithm}[!ht]
\begin{algorithmic}[1]
	\Statex {\bf Input:} $G$: $\ONEONELQ$ instance, $M$: a minimal feasible matching
	\Statex {\bf Output:} An extension $M'$ of $M$
	\State For every unmatched resource $b$, compute the threshold agent $t(b)$.
	\State Let $G'$ be the sub-graph of $G$, where
		$E(G') = \{(a,b) \mid b$ and $a$ are unmatched in $M$ and $a \succ_b t(b)\}$
	\State For every $b$ in $V(G')$, set $q^+(b) = 1$
	\State Each vertex in $G'$ has the preference list derived from $G$ restricted to its neighbors in $G'$
	\State Compute the agent-optimal stable matching $M_s$ in $G'$ and return $M' = M \cup M_s$
\end{algorithmic}
	\caption{Algorithm $\ALG$}
	\label{algo:extend}
\end{algorithm}

We note that the $G'$ constructed in the algorithm is a $\ONEONE$ instance (that is, there are no $\LQ$ resources).
It is straightforward to observe that Algorithm $\ALG$ takes linear time in the size of input (number of edges
in the underlying graph).
Before presenting our $\FPT$ algorithm, we 
prove an important property of $M'$
computed by $\ALG$ (Cor.~\ref{corr:extend}).

Let $M_1$ be an arbitrary envy-free matching in $G$ that contains the given mininal feasible matching $M$, 
that is, $M \subseteq M_1$.
We first observe that $M_1 \setminus M \subseteq E(G')$ -- suppose not, then
there exists an edge $(a',b') \in M_1 \setminus M$ such that $(a',b') \notin E(G')$.
This implies that $t(b') \succ_{b'} a'$, therefore $t(b')$ envies $a'$ w.r.t. $M_1$ in $G$, a contradiction.
Also, $M_1 \setminus M$ is envy-free in $G'$, otherwise $M_1$ is also not envy-free in $G$.
We observe that the stable matching $M_s$ is (trivially) feasible in $G'$,
hence by Lemma~\ref{lem:stbl_feas}, 
$M_s$ is a maximum size envy-free matching in $G'$.
Thus, $|M_1 \setminus M| \leq |M_s|$, thereby implying that 
$|M_1| = |M| + |M_1\setminus M| \leq |M| + |M_s| = |M'|$.
That is, the size of an arbitrary envy-free matching $M_1$ in $G$ that contains $M$ is
upper-bounded by the size of $M'$ computed by $\ALG$.
However, $M'$ need not be envy-free in $G$. Therefore, we obtain the following corollary.
\begin{corollary}\label{corr:extend}
	Given a minimal feasible matching $M$ in a $\ONEONELQ$ instance $G$, 
	Algorithm $\ALG$ computes an extension $M'$ of $M$
	such that if $M'$ is envy-free in $G$ then $M'$ is a maximum
	size envy-free matching containing $M$ in $G$.
\end{corollary}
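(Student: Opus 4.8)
The plan is to verify first that $M' = M \cup M_s$ is a genuine feasible extension of $M$, and then to establish size-optimality under the stated hypothesis. For the first point I would note that the auxiliary graph $G'$ built by $\ALG$ contains only agents and resources that are unmatched in $M$; hence $M_s$, being a matching in $G'$, shares no endpoint with $M$, so $M' = M \cup M_s$ is indeed a matching with $M \subseteq M'$. Feasibility is immediate, since $M$ is feasible (every $\LQ$ resource is already matched in $M$) and adding the edges of $M_s$ cannot unmatch any vertex; note also that $G'$ is a $\ONEONE$ instance, so the stable matching $M_s$ there is trivially feasible. So the only real content is the size bound.

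For that, I would fix an arbitrary envy-free matching $M_1$ in $G$ with $M \subseteq M_1$ and argue $|M_1| \le |M'|$. The two key claims are (a) $M_1 \setminus M \subseteq E(G')$, and (b) $M_1 \setminus M$ is envy-free in $G'$. For (a): any edge $(a',b') \in M_1 \setminus M$ has both endpoints unmatched in $M$ because $M \subseteq M_1$ and $M_1$ is a matching; it then remains to check $a' \succ_{b'} t(b')$. If $t(b')$ is the dummy agent at the end of $b'$'s list this holds trivially; otherwise $t(b')$ is matched in $M$ with $b' \succ_{t(b')} M(t(b'))$ and $t(b') \ne a'$ (as $a'$ is $M$-unmatched), so if we had $t(b') \succ_{b'} a'$ then $t(b')$ would envy $a'$ with respect to $M_1$ in $G$, contradicting envy-freeness of $M_1$; hence $(a',b') \in E(G')$. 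For (b): since $V(G')$ consists only of $M$-unmatched vertices, $(M_1 \setminus M)(v) = M_1(v)$ for every $v \in V(G')$, and $E(G') \subseteq E(G)$, so any envy pair in $G'$ with respect to $M_1 \setminus M$ would also be an envy pair in $G$ with respect to $M_1$, which is impossible.

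With (a) and (b) in hand I would invoke Lemma~\ref{lem:stbl_feas}: as $G'$ is a $\ONEONE$ instance, the stable matching $M_s$ computed by $\ALG$ is feasible in $G'$ and hence a maximum-size envy-free matching in $G'$. Since $M_1 \setminus M$ is an envy-free matching in $G'$, we get $|M_1 \setminus M| \le |M_s|$, hence $|M_1| = |M| + |M_1 \setminus M| \le |M| + |M_s| = |M'|$. As $M'$ is itself (by hypothesis) an envy-free matching in $G$ containing $M$, this shows $M'$ has maximum size among all such matchings, which is the statement.

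The only step needing genuine care is claim (a) — in particular handling the dummy threshold agent and recording that the threshold agent, being $M$-matched, cannot coincide with a freshly matched agent of $M_1 \setminus M$; the rest is routine bookkeeping about matchings restricted to $G'$. I would also remark that agent-optimality of $M_s$ is not used in this corollary (any stable matching in $G'$ would do); that refinement becomes relevant only in the subsequent $\FPT$ argument.
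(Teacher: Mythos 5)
Your proposal is correct and follows essentially the same route as the paper: fix an arbitrary envy-free $M_1 \supseteq M$, show $M_1 \setminus M \subseteq E(G')$ via the threshold-agent/envy argument and that it is envy-free in $G'$, then apply Lemma~\ref{lem:stbl_feas} to bound $|M_1 \setminus M| \le |M_s|$ and conclude $|M_1| \le |M'|$. Your extra bookkeeping (dummy threshold agent, disjointness of $M_s$ and $M$, the remark that agent-optimality is not needed here) only makes explicit what the paper leaves implicit.
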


\noindent
Now we present our $\FPT$ algorithm ($\alg_{efm}$) for the $\MAXEFM$ problem.
		\begin{itemize}
			\item [1] For each assignment $M_e$ of agents to $\LQ$ resources:
		\begin{itemize}
			\item [a.] 
				If $M_e$ is infeasible or if a matched agent in $M_e$ envies, then discard it. 
				Otherwise run Algorithm $\ALG$ to compute an extension $M_e'$ of $M_e$.
				If $M_e'$ is not envy-free in $G$, discard $M_e'$.
		\end{itemize}
	\item [2] Among all matchings $M_e'$ computed for different assignments $M_e$, output the largest size matching.
		\end{itemize}

\begin{lemma}\label{lem:fpt_efm}
	Algorithm $\alg_{efm}$ runs in $O(m\cdot \ell_{LQ}^q)$ and $O(m\cdot |\overline{\A}|!)$ time respectively for parameters $(q,\ell_{LQ})$ and $|\overline{\A}|$ and computes a maximum size feasible, envy-free matching in $G$.
\end{lemma}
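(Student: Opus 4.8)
The plan is to prove the two parts of Lemma~\ref{lem:fpt_efm}---the running time and the correctness---separately.

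For the running time, I would first bound the number of assignments $M_e$ that the outer loop of $\alg_{efm}$ iterates over. When the parameters are $(q,\ell_{LQ})$, each of the $q$ many $\LQ$ resources chooses a match among at most $\ell_{LQ}$ agents on its preference list (or remains unmatched, but an unmatched $\LQ$ resource makes $M_e$ infeasible, so we may restrict to choices from the list), giving at most $\ell_{LQ}^q$ assignments; note any assignment of the same agent to two $\LQ$ resources is discarded immediately. When the parameter is $|\overline{\A}|$, every $\LQ$ resource draws its match from $\overline{\A}$, so an assignment is an injective partial map from the $\LQ$ resources into $\overline{\A}$, and the number of such maps is at most $|\overline{\A}|!$ (since $q\le|\overline{\A}|$, as otherwise $G$ has no feasible matching). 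For each assignment, the feasibility check, the envy check on the matched agents of $M_e$, the call to Algorithm $\ALG$ (linear in the number of edges, as observed just above), and the final envy-freeness test on $M_e'$ in $G$ all take $O(m)$ time. Multiplying gives the claimed $O(m\cdot\ell_{LQ}^q)$ and $O(m\cdot|\overline{\A}|!)$ bounds.

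For correctness I must show the algorithm outputs a maximum size feasible, envy-free matching in $G$ (such a matching exists by assumption on the input). First, soundness: every matching the algorithm outputs is, by construction, an extension $M_e'$ of some $M_e$ that passed the explicit envy-freeness-in-$G$ test, and $M_e'\supseteq M_e$ is feasible because $M_e$ already matches every $\LQ$ resource and $\ALG$ only adds edges. Hence the output, when the algorithm outputs anything, is feasible and envy-free in $G$. Second, optimality: let $M^\ast$ be a maximum size feasible, envy-free matching in $G$. Let $M_e^\ast$ be the restriction of $M^\ast$ to the edges incident on $\LQ$ resources. I claim $M_e^\ast$ is a \emph{minimal} feasible matching: it is feasible since $M^\ast$ matches every $\LQ$ resource, and deleting any edge of $M_e^\ast$ leaves some $\LQ$ resource unmatched, hence infeasible. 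Moreover $M_e^\ast$ is one of the assignments enumerated in step~1, and since $M_e^\ast\subseteq M^\ast$ and $M^\ast$ is envy-free in $G$, no matched agent of $M_e^\ast$ envies, so $M_e^\ast$ is not discarded in step~1a and $\ALG$ is run on it to produce $M_e^{\ast\prime}$. By Corollary~\ref{corr:extend}, since $M^\ast$ is an envy-free matching in $G$ containing $M_e^\ast$, we have $|M^\ast|\le|M_e^{\ast\prime}|$; and if $M_e^{\ast\prime}$ happens to be envy-free in $G$ then by the same corollary it is itself a maximum envy-free matching containing $M_e^\ast$, so in particular $|M_e^{\ast\prime}|\le|M^\ast|$, giving $|M_e^{\ast\prime}|=|M^\ast|$. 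The only gap to close is the possibility that $M_e^{\ast\prime}$ is \emph{not} envy-free in $G$ and is therefore discarded; I would argue that in that case some other enumerated assignment still yields an optimal matching---indeed, $M^\ast$ itself, restricted to $\LQ$-incident edges, is $M_e^\ast$, and $M^\ast\setminus M_e^\ast\subseteq E(G')$ for the graph $G'$ built by $\ALG$ on input $M_e^\ast$ (by the threshold-agent argument preceding Corollary~\ref{corr:extend}), so $M^\ast\setminus M_e^\ast$ is an envy-free matching in $G'$ of size $|M^\ast|-|M_e^\ast|$; by Corollary~\ref{corr:extend} the agent-optimal stable matching $M_s$ in $G'$ has at least this size, and one shows the agent-optimal choice cannot introduce envy in $G$ that $M^\ast$ avoided, whence $M_e^{\ast\prime}=M_e^\ast\cup M_s$ is in fact envy-free in $G$---contradicting the assumed discard. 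This last point is the main obstacle: proving that the specific agent-optimal stable matching computed by $\ALG$ never sacrifices envy-freeness relative to an arbitrary optimal $M^\ast$ containing the same $\LQ$-assignment. I expect this is exactly why $\ALG$ was defined to take the agent-optimal (rather than arbitrary) stable matching, and the argument will hinge on the agent-optimality property $M_s(a)\succeq_a N_s(a)$ stated in the description of $\ALG$, together with the threshold-agent definition ensuring no agent outside $G'$ can be envied.

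Finally, I would combine soundness and optimality: the algorithm considers among others the assignment $M_e^\ast$, produces from it a feasible envy-free matching in $G$ of size $|M^\ast|$, never outputs anything of larger size (no feasible envy-free matching in $G$ exceeds $|M^\ast|$), and step~2 returns the largest computed matching; hence the output has size exactly $|M^\ast|$ and is a maximum size feasible, envy-free matching, as required. This, together with Corollary~\ref{corr:extend} applied to $\MAXRSM$ in the next subsection, establishes the $\MAXEFM$ half of Theorem~\ref{thm:fpt}.
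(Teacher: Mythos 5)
Your running-time analysis and the overall skeleton of the correctness argument match the paper: enumerate assignments, observe that $M_e = M^*_{LQ}$ (the restriction of an optimal $M^*$ to the $\LQ$ resources) is a minimal feasible assignment that survives the preliminary checks, and then use Corollary~\ref{corr:extend} to conclude $|M_e'| \geq |M^*|$, so that step~2 returns a matching of optimal size. However, you leave the central claim unproven and explicitly flag it as ``the main obstacle'': you must show that the extension $M_e'$ produced by $\ALG$ from this particular $M_e$ \emph{is} envy-free in $G$, for otherwise it is discarded and nothing guarantees that any surviving assignment attains size $|M^*|$. Saying that ``one shows the agent-optimal choice cannot introduce envy'' is exactly the statement that needs a proof, and it is where essentially all the work in the paper's argument lies.

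Concretely, the paper proves this by assuming some agent $a$ envies $a'$ w.r.t.\ $M_e'$ and splitting on where $a'$ is matched. If $a'$ is matched in $M_e$, then $a$ is not, and one wants to conclude $M_e(a') \succ_a M^*(a)$ so that $a$ already envied $a'$ in $M^*$; the subtlety you do not address is that $N = M^* \setminus M^*_{LQ}$ is only known to be \emph{envy-free} in $G'$, not stable, so the agent-optimality of $M_s$ cannot be invoked against $N$ directly. The paper bridges this by running a resource-proposing round from $N$ to obtain a stable matching $N'$ in $G'$ in which every agent is matched at least as well as in $N$, then chaining $M_s(a) \succeq_a N'(a) \succeq_a N(a)$ to transport the envy back to $M^*$. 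If instead $a'$ is matched in $M_s$, the paper uses the threshold-agent definition: since $(a',b') \in M_s \subseteq E(G')$ we have $a' \succ_{b'} t(b')$, while $a$ being matched in $M_e$ with $b' \succ_a M_e(a)$ forces $t(b') = a$ or $t(b') \succ_{b'} a$, contradicting $a \succ_{b'} a'$. You correctly guessed that agent-optimality and the threshold agents are the relevant ingredients, but without these two case analyses (and in particular the $N \to N'$ stabilization step) the proof is incomplete. A minor additional point: the size bound $|M^*| \leq |M_e'|$ comes from the discussion preceding Corollary~\ref{corr:extend} (via Lemma~\ref{lem:stbl_feas} applied in $G'$) and holds unconditionally, whereas the corollary's conclusion as stated is conditional on $M_e'$ being envy-free; your appeal to it should be phrased accordingly.
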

\begin{proof}
	\noindent{\bf Running time.} For the parameters $(q,\ell_{LQ})$, the step~1 considers at most $\ell_{LQ}$
ways of assigning an agent to an $\LQ$ resource.
Thus, our algorithm considers at most $\ell_{LQ}^q$ many different assignments.
	For the parameter $|\overline{\A}|$, the step~1 considers at most $|\overline{\A}|!$ ways of assigning
an agent to an $\LQ$ resource.
Thus, our algorithm considers $|\overline{\A}|!$ many different assignments.
Checking whether an assignment is feasible and whether a matched agent envies and to extend it using Algorithm $\ALG$ takes linear time. 
	Thus the algorithm runs in $O(m\cdot \ell_{LQ}^q)$ and $O(m\cdot |\overline{\A}|!)$ time respectively for $(q,\ell_{LQ})$ and $|\overline{\A}|$.

\noindent{\bf Correctness.} Suppose $M^*$ is a maximum size feasible, envy-free matching in $G$.
Let $M^*_{LQ}$ be the matching $M^*$ restricted to the $\LQ$ resources.
	Our algorithm considers this assignment as $M_e = M^*_{LQ}$. 
	Note that $M_e$ is feasible. Next we show that $M_e$ is not discarded due to a matched agent that envies another (matched) agent in $M_e$.
Suppose there exists a matched agent $a$ in $M_e$ who envies another matched agent $a'$.
	Then, the envy pair also exists w.r.t. $M^*_{LQ}$ and therefore, w.r.t. $M^*$, a contradiction.
Hence, $M_e$ is not discarded and an extension $M_e'$ is computed by $\ALG$.

Next we show that $M_e'$ is envy-free in $G$ and hence not discarded.
	Suppose for the contradiction that $M_e'$ is not envy-free in $G$.
	Then there exist agents $a$ and $a'$ such that $a$ envies $a'$ in $M_e'$.
	Then $a'$ is matched in $M_e'$.
	First suppose that $a'$ is matched in $M_e$.
	By the choice of $M_e$, a matched agent in $M_e$ does not envy another matched agent in $M_e$,
	therefore $a$ is not matched in $M_e$.
	Thus, $a$ is either unmatched in $M_s$ (computed by $\ALG$) or $a$ is matched in $M_s$ such that $M_e(a') \succ_a M_s(a)$.

	Let $N = M^* \setminus M^*_{LQ}$.
	As discussed earlier, $N$ is an envy-free matching in $G'$.
	If $N$ is a stable matching in $G'$ computed in $\ALG$ then,
	by the agent-optimality of $M_s$, $M_s(a) = N(a)$ or $M_s(a) \succ_a N(a)$.
	In either case, $M^*_{LQ}(a') = M_e(a') \succ_a N(a)$.
	Thus, $a$ envies $a'$ in $M^*$, a contradiction.
	If $N$ is not a stable matching in $G'$ then since $N$ is envy-free,
	there is a blocking pair $(a_1,b_1)$
	such that $b_1$ is unmatched in $N$.

	For the analysis purpose, suppose we run a resource-proposing round~\cite{GI89} in $G'$
	such that all the unmatched resources in $N$ get a chance to propose to
	every agent in their list in $G'$. During this step, if a resource is unmatched for the first time,
	it also starts proposing from the beginning of its list.
	Let $N'$ be the matching computed by this step. Then, $N'$ is stable in $G'$.
	By the property of resource-proposing round, either $N'(a) = N(a)$ or $N'(a) \succ_a N(a)$
	for each agent $a$ that is unmatched in $M^*_{LQ}$.
	Now, due to the agent-optimality of $M_s$, we have $M_s(a) = N'(a)$ or $M_s(a) \succ_a N'(a)$.
	This implies that $M^*_{LQ}(a') = M_e(a') \succ_a N'(a)$, that is 
	$M^*_{LQ}(a') = M_e(a') \succ_a N(a)$, implying that $a$ envies $a'$ in $M^*$, a contradiction.

	Next, suppose that $a'$ is matched in $M_s$.
	Then, $a$ must be matched in $M_e$,
	otherwise $(a,M_s(a'))$ form a blocking pair w.r.t. $M_s$, a contradiction.
	Let $b = M_e(a)$ and $b' = M_s(a')$.
	By the claimed envy, $b' \succ_a b$ and $a \succ_{b'} a'$.
	Since the edge $(a',b') \in M_s \subseteq E(G')$, it implies that $a' \succ_{b'} t(b')$.
	By the definition of threshold agent, $t(b') = a$ or $t(b') \succ_{b'} a$.
	In either case, we get $a' \succ_{b'} a$, leading to a contradiction.
	Thus, $M_e'$ is envy-free in $G$ and hence it is not discarded.

	By Cor.~\ref{corr:extend}, $M_e'$ is a maximum size envy-free matching in $G$
	containing $M_e$. That is, $|M_e'| \geq |M^*|$.
	Let $\widehat{M}$ be the matching returned by our algorithm. Then, either $\widehat{M} = M_e'$
	or $|\widehat{M}| \geq |M_e'| \geq |M^*|$.
	By the choice of $M^*$, we conclude that $|\widehat{M}| = |M^*|$, that is,
	our algorithm outputs a maximum size envy-free matching in $G$.
\end{proof}

\subsection{$\FPT$ algorithm for the $\MAXRSM$ problem}
For the $\MAXRSM$ problem as well, our algorithm extends a given minimal feasible matching
and returns an extension of the maximum size.
We modify the algorithm $\alg_{efm}$ by replacing step 1a. appropriately and refer to this algorithm as $\alg_{rsm}$.

		\begin{itemize}
			\item [1] For each assignment $M_e$ of agents to $\LQ$ resources:
		\begin{itemize}
			\item [a.] If $M_e$ is not a feasible matching then discard it. 
			Otherwise construct a $\ONEONE$ instance $H$ as follows -- 
			prune the given $\ONEONELQ$ instance by removing the 
			agents and resources matched in $M_e$.
			Then compute the agent-optimal stable matching $M_s$ in $H$ and let $M_e' = M_e \cup M_s$.
		If $M_e'$ is not relaxed stable in $G$ then discard it.
		\end{itemize}
	\item [2] Among all matchings $M_e'$ computed for different assignments $M_e$, output the largest size matching.
		\end{itemize}

\begin{lemma}\label{lem:fpt_rsm}
	Algorithm $\alg_{rsm}$ runs in $O(m\cdot \ell_{LQ}^q)$ and $O(m\cdot |\overline{\A}|!)$ time respectively for parameters $(q,\ell_{LQ})$ and $|\overline{\A}|$ and computes a maximum size feasible, relaxed stable matching.
\end{lemma}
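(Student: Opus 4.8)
The plan is to mirror the structure of the proof of Lemma~\ref{lem:fpt_efm} for $\alg_{efm}$, substituting relaxed stability for envy-freeness throughout. The running-time bound is immediate and identical: step~1 enumerates at most $\ell_{LQ}^q$ (resp.\ $|\overline{\A}|!$) assignments of agents to $\LQ$ resources, and for each the work — checking feasibility, pruning, computing the agent-optimal stable matching $M_s$ in the $\ONEONE$ instance $H$, and verifying relaxed stability of $M_e' = M_e \cup M_s$ in $G$ — is linear in $m$. So the real content is correctness, and the real work is an analogue of Corollary~\ref{corr:extend}: given a minimal feasible matching $M$ (here $M = M_e$), the algorithm $\ALG$-analogue (prune and take the agent-optimal stable $M_s$ in $H$) produces an extension $M' = M \cup M_s$ that, \emph{if relaxed stable in $G$}, is a maximum-size relaxed stable matching containing $M$ in $G$.

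\textbf{Establishing the extension property.} First I would argue the size-optimality. Let $M_1$ be an arbitrary relaxed stable matching in $G$ with $M \subseteq M_1$. Every edge of $M_1 \setminus M$ lies entirely within the pruned instance $H$ (its endpoints are agents and resources left unmatched by $M = M_e$, since $M_1$ restricted to the $\LQ$ resources must agree with the feasible assignment $M_e$ when $M_e$ is exactly the $\LQ$-part — this needs care and I address it below). Moreover $M_1 \setminus M$ is relaxed stable in $H$: $H$ has no $\LQ$ resources, so relaxed stability in $H$ means no agent (matched or unmatched) in $H$ participates in a blocking pair \emph{within $H$}; if some such blocking pair existed it would also be a blocking pair in $G$ witnessing that the corresponding agent $a$ (with $M_1(a)$ either $\bot$ or a non-$\LQ$ resource of $H$) violates relaxed stability of $M_1$ in $G$. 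Hence $M_1 \setminus M$ is in particular stable... no: in a $\ONEONE$ instance, relaxed stability \emph{is} stability (no blocking pairs at all, by the $\ONEONELQ$ definition specialised to $q^- = 0$ everywhere, since every matched agent's resource would be non-$\LQ$, hence forbidden from blocking, and unmatched agents are always forbidden). Therefore $M_1 \setminus M$ is a stable matching in $H$, so by the Rural Hospitals Theorem all stable matchings of $H$ — including $M_s$ — have the same size, giving $|M_1 \setminus M| = |M_s|$, hence $|M_1| \le |M| + |M_s| = |M'|$.

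\textbf{Showing $M_e' = M^*$ restricted is not discarded.} For correctness I would take $M^*$ a maximum-size feasible relaxed stable matching in $G$, set $M_e = M^* \cap (\text{edges at }\LQ\text{ resources})$, and show (i) $M_e$ is feasible — clear; (ii) the algorithm does not discard $M_e'$, i.e.\ $M_e'$ is relaxed stable in $G$. Step (ii) is where the agent-optimality of $M_s$ is used, exactly as in Lemma~\ref{lem:fpt_efm}: write $N = M^* \setminus M_e$, a stable matching in $H$ (by the argument above, relaxed stability of $M^*$ in $G$ forces $N$ stable in $H$). Suppose $M_e'$ violates relaxed stability in $G$ via a blocking pair $(a,b)$ with $M_e'(a) \in \{\bot, \text{non-}\LQ\}$. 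Case analysis on whether $a,b$ are in $H$ or matched by $M_e$: if both in $H$, then $(a,b)$ blocks $M_s$ in $H$ with $a$ unmatched-or-free in $M_s$ — but $M_s$ is stable in $H$, contradiction. If $b$ is matched in $M_e$ (so $b$ is an $\LQ$ resource), then $a \succ_b M_e(b) = M^*(b)$ and $b \succ_a M_e'(a)$; by agent-optimality $M_s(a) = N(a)$ or $M_s(a) \succ_a N(a)$, so $M^*(a) = N(a)$ satisfies $b \succ_a M_s(a) \succeq_a N(a) = M^*(a)$, whence $(a,b)$ blocks $M^*$ with $M^*(a)$ equal to $M_e'(a)$ up to the usual improvement — and since $M^*(a)$ is then $\bot$ or a non-$\LQ$ resource, this contradicts relaxed stability of $M^*$ in $G$. (If $N$ is not already the agent-optimal stable matching of $H$, I would insert the same resource-proposing-round device as in Lemma~\ref{lem:fpt_efm} to pass from $N$ to a stable matching that is still agent-dominated by $M_s$.) The remaining subcase — $a$ matched by $M_e$ — is handled as in Lemma~\ref{lem:fpt_efm} via the threshold/stability argument, or here more simply since $M_e$ is exactly the $\LQ$-part. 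Then by the extension property, $|M_e'| \ge |M^*|$, and since the algorithm outputs the largest $M_e'$ over all assignments, the output $\widehat M$ satisfies $|\widehat M| \ge |M^*|$; maximality of $M^*$ forces equality, and $\widehat M$ is feasible and relaxed stable in $G$ by construction (it was not discarded). The step I expect to be the main obstacle is making rigorous the claim that every edge of $M_1 \setminus M_e$ lies in $H$ — equivalently that a relaxed stable $M_1 \supseteq M_e$ cannot match some $\LQ$ resource differently from $M_e$ while still containing $M_e$ — which is really just the observation that $M_e$ already occupies the full one-slot quota of each $\LQ$ resource it touches and $M_1$ being a matching forbids overlap; the subtlety is only that $M_e$ as enumerated may leave some $\LQ$ resource with $q^- = 0$ unmatched, which is fine since such resources are non-$\LQ$-behaving and land in $H$.
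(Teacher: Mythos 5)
Your proof is correct and follows essentially the same route as the paper: enumerate the assignments to $\LQ$ resources, observe that for $M_e = M^*_{LQ}$ the residual matching $N = M^*\setminus M_e$ is stable (not merely relaxed stable) in the pruned $\ONEONE$ instance $H$, use the agent-optimality of $M_s$ to show $M_e'$ is relaxed stable in $G$, and conclude maximality via the equal sizes of stable matchings in $H$; your explicit extension-property argument only spells out what the paper leaves implicit. One small simplification: the resource-proposing-round fallback you hedge with is unnecessary here, precisely because $N$ is stable in $H$, so the agent-optimal property $M_s(a)\succeq_a N(a)$ applies directly (that device is needed only in the envy-free case, where $N$ need not be stable).
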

\begin{proof}
\noindent{\bf Running time.} The running time analysis is similar to that for $\alg_{efm}$.
Checking feasibility of $M_e$, constructing $H$ 
	and computing the agent-optimal stable matching takes linear time. 
	Thus the algorithm runs in $O(m\cdot \ell_{LQ}^q)$ and $O(m\cdot |\overline{\A}|!)$ time respectively for $(q,\ell_{LQ})$ and $|\overline{\A}|$.

\noindent{\bf Correctness. } 
	Our approach to prove the correctness of $\alg_{rsm}$
	is similar to the proof of correctness of $\alg_{efm}$
	but crucially depends on the notion of relaxed stability and the modified step 1a.
Suppose $M^*$ is a maximum size feasible, relaxed stable matching in $G$.
Let $M^*_{LQ}$ be the matching $M^*$ restricted to the $\LQ$ resources.
Our algorithm considers this assignment $M_e = M^*_{LQ}$
and since it is feasible the matching $M_e'$ is computed for this assignment.
	Let $N = M^* \setminus M^*_{LQ}$.
	We show that $N$ is stable in $H$ -- if not there is a blocking pair $(a,b)$
	such that $a$ is either unmatched or matched to a lower-preferred resource $b'$ in $H$.
	Since $H$ is a $\ONEONE$ instance, $b'$ is a non-$\LQ$ resource.
	This implies that $N$ violates relaxed stability and so does $M^*$, a contradiction.

We claim that the computed matching $M_e' = M_s \cup M_e$ is relaxed stable in $G$ --
suppose not. Then there exists an agent $a$ 
such that $a$ blocks relaxed stability of $M_e'$. Note that such $a$ is unmatched in $M_e = M^*_{LQ}$ and hence 
	each of $M^*(a) (= N(a))$ and $M_e'(a) (= M_s(a))$ is either $\bot$ or a non-$\LQ$ resource.
Since $M_s$ is stable in $H$ but $M_e'$ is not relaxed stable in $G$,
	$a$ must form a blocking pair with an $\LQ$ resource $b$ such that $M_e(b) = a'$
and $a' \prec_b a$ and $M_s(a) \prec_a b$.
But since $M_e = M^*_{LQ}$, $(a',b) \in M^*_{LQ} \subseteq M^*$. 
	By the agent-optimality of $M_s$ in $H$, we have either $M_s(a) = N(a)$ or
$M_s(a) \succ_a N(a)$.
	Thus, we get $M^*(a) = N(a) \prec_a b$ implying that $(a,b)$ forms a blocking pair w.r.t. $M^*$ as well. Since $a$ is either
unmatched in $M^*$ or $M^*(a)$
is a non-$\LQ$ resource, the blocking pair $(a,b)$ violates relaxed stability of $M^*$ in $G$ -- a contradiction.
\end{proof}

This establishes Theorem~\ref{thm:fpt}.

We remark that our $\FPT$ algorithms are simple and hence practically appealing
however the correctness proofs are non-trivial and crucially dependent on the optimality notion.

\section{Parameterized hardness results}\label{sec:hardness}
In this section, we prove Theorem~\ref{thm:hardness} and Corollary~\ref{corr:rsmparanphard}.

\subsection{Proof of Theorem~\ref{thm:hardness}}
We remark that the inapproximability result of $\MAXEFM$ problem 
(Theorem~1~\cite{DBLP:SAGT20}) does not immediately imply $W[1]$-hardness of $\MAXEFM$.
We present a reduction from Independent Set ($\IS$) problem.
Let $\langle G = (V,E), k\rangle$ be an instance of the
$\IS$ problem where $n = |V|$ and $m = |E|$. The goal in $\IS$
is to decide whether $G$ has an independent set of size  $k$ i.e. a subset of $k$ vertices that are pairwise non-adjacent. 
We construct a $\ONEONELQ$ instance $G' = (\A \cup \B, E')$ of the $\MAXEFM$ problem as follows.

\begin{figure}[!ht]
\begin{center}
\begin{minipage}{0.4\textwidth}
\begin{align*}
	1 \leq i \leq n,\ a_i &: B_i, X\\
	1 \leq j \leq m,\ a'_j &: B_{j1}, B_{j2}
\end{align*}
\end{minipage}%
\hfill
\begin{minipage}{0.5\textwidth}
\begin{align*}
	1 \leq i \leq n,\ 1 \leq u \leq q_i,\ [0,1]\ b_i^u &: a_i, \EE_i\\
	1 \leq j \leq k,\ [1,1]\ x_j &: a_1, a_2, \ldots, a_n
\end{align*}
\end{minipage}%
	\end{center}
    \caption{Reduced $\ONEONELQ$ instance $G'$ of $\MAXEFM$ from instance $\langle G, k \rangle$ of $\IS$.}
    \label{fig:reduction_is1}
\end{figure}

\noindent{\bf Reduction:} For every vertex $v_i \in V$, we have a vertex-agent $a_i \in \A$; for every edge $e_j \in E$,
we have an edge-agent $a_j' \in \A$. Thus $|\A| = m+n$. 
Let $E_i$ denote the set of edges incident on $v_i$ in $G$. 
Let $\EE_i$ denote the set of edge-agents corresponding to edges in $E_i$.
Let $q_i = |E_i| +1$ for all vertices $v_i \in V$. For every vertex $v_i \in V$, let $B_i = \{b_i^1, b_i^2, \ldots, b_i^{q_i}\}$ be the set of resources corresponding to $v_i$. Let $X = \{x_1, x_2, \ldots, x_k\}$ be also a set of $k$ resources. Every resource in set $B_i$ has zero lower-quota and an upper-quota equal to $1$. Every resource $x_i \in X$ has both lower and upper-quota equal to $1$. 

\noindent{\bf Preference lists:} The preferences of agents and resources are given in Fig.~\ref{fig:reduction_is1}. We fix an arbitrary ordering on sets $X$, $B_i$, $\EE_i$. A vertex-agent $a_i$ has the set $B_i$ followed by set $X$. An edge-agent $a_j'$ has two sets of resources (denoted by $B_{j1}$ and $B_{j2}$) corresponding to the end-points $v_{j1}, v_{j2}$ of the edge $e_j$. Every resource $b_i^u \in B_i$ has the vertex-agent $a_i$ followed by the edge-agents in $\EE_i$. Finally the resources in $X$ have in their preference lists all the $n$ vertex-agents.

\noindent {\bf Properties of the reduced instance. }
Recall that $q$ denotes the number of $\LQ$ resources in the instance.
In the reduced instance $G'$ there are $k$ $\LQ$ resources, hence $q = k$.
A stable matching in $G'$ is infeasible. This property is necessary otherwise 
by Lemma~\ref{lem:stbl_feas} 
the instance is trivially solvable in polynomial time.
Recall that $d$ and $n_d$ respectively denote the deficiency of the instance
and the number of deficient resources (w.r.t. a stable matching).
The deficiency of $G'$ is $k$, hence $G'$ has $d = n_d = k$.
A pair of vertex-agents does not have a common non-$\LQ$ resource
in their preference list.
An edge-agent $a_j'$ has at most $q_i$ non-$\LQ$ resources in its list that are common with
another edge-agent $a_k'$ or with a vertex-agent $a_i$.
Thus in $G'$, $t$ is same as the maximum value of $q_i$.
If every vertex $v_i$ in $G$ has a constant degree 
$w$ then $q_i = w+1$ for every vertex $v_i$, implying that $t$ is constant in $G'$.

\begin{lemma}\label{lem:lem_is1}
$G$ has an independent set of size $k$ if and only if $G'$ has a feasible, envy-free matching of size $m+n$.
\end{lemma}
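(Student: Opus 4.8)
\textbf{Proof strategy for Lemma~\ref{lem:lem_is1}.}
The plan is to establish both directions of the equivalence by carefully tracking how vertex-agents must be matched in any feasible matching of size $m+n$, and using the structure of the resource sets $B_i$ and $X$. The key quantitative observation is that $|\A| = m+n$, so a feasible, envy-free matching of size $m+n$ must match \emph{every} agent. This is the linchpin: feasibility forces all $k$ resources in $X$ to be matched, and these can only receive vertex-agents, so exactly $k$ vertex-agents go to $X$ while the remaining $n-k$ vertex-agents and all $m$ edge-agents must be matched to $B$-resources.

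\textbf{Forward direction.} Given an independent set $S$ of size $k$ in $G$, I would build $M$ as follows: for each $v_i \in S$, match $a_i$ to some resource $x_j \in X$ (a bijection between $S$ and $X$); for each $v_i \notin S$, match $a_i$ to $b_i^1 \in B_i$. For the edge-agents: each edge $e_j = (v_{j1}, v_{j2})$ has at least one endpoint outside $S$ (since $S$ is independent, at most one endpoint is in $S$); route $a_j'$ to a free resource in the $B$-set of an endpoint not in $S$. The counting that makes this work: for a vertex $v_i \notin S$, the resources $B_i$ number $q_i = |E_i| + 1$; we use one for $a_i$ itself and the remaining $|E_i|$ are available for the edge-agents of edges incident to $v_i$ — and since each such edge $e_j$ has $v_i$ as an available endpoint, a simple matching/Hall-type argument (or just a direct greedy assignment, since each $B_i$ with $v_i\notin S$ can absorb exactly its incident edges) shows all edge-agents can be placed. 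Then I must verify envy-freeness: the only candidates for envy are toward vertex-agents matched to $X$, since $B_i$-resources most prefer $a_i$ (who is matched there or better, i.e., to $X$) — actually for $v_i \notin S$, $a_i$ is matched to $b_i^1$, its top choice, so no envy there; for $v_i \in S$, $a_i$ is matched to $X$ which it ranks below all of $B_i$, but every $b_i^u \in B_i$ is either matched to an edge-agent $a_j' \in \EE_i$ (and $b_i^u$ prefers $a_i$... wait, need to check) — here one must check that no edge-agent sitting in $B_i$ creates envy from $a_i$: indeed $b_i^u$ prefers $a_i$ over any edge-agent, so $a_i$ would envy that edge-agent. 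The resolution: for $v_i \in S$, \emph{no} edge-agent is placed in $B_i$ (we only used $B$-sets of vertices outside $S$), so all of $B_i$ is empty, hence $a_i$ envies no one. This is exactly why independence is needed.

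\textbf{Reverse direction.} Suppose $M$ is feasible and envy-free of size $m+n$, so every agent is matched. As noted, the $k$ resources of $X$ are all matched to distinct vertex-agents; let $S = \{v_i : M(a_i) \in X\}$, so $|S| = k$. I claim $S$ is independent. Suppose $e_j = (v_{j1}, v_{j2})$ with both endpoints in $S$. Then $a_{j1}$ and $a_{j2}$ are matched to $X$, which each ranks below all their own $B$-resources. Now $a_j'$ is matched (size is $m+n$) to some resource in $B_{j1} \cup B_{j2}$, say $b \in B_{j1}$. But $b$ prefers $a_{j1}$ (its top choice) over the edge-agent $a_j'$, and $a_{j1}$ prefers $b$ over $M(a_{j1}) \in X$ — so $a_{j1}$ envies $a_j'$, contradicting envy-freeness. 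Hence $S$ is independent of size $k$.

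\textbf{Main obstacle.} The main delicacy is the forward direction: ensuring that \emph{all} $m$ edge-agents can be simultaneously placed into the $B$-sets of vertices outside $S$, without overflowing any single resource (each has upper-quota $1$) and without touching $B$-sets of vertices in $S$. The clean way is to observe that for $v_i \notin S$ we have $|E_i|$ spare resources in $B_i$ and each edge $e_j$ incident to $v_i$ can be assigned to one of them; setting up the bipartite graph between edge-agents and these spare resources and checking Hall's condition (equivalently, just assigning each edge greedily to a chosen available endpoint, since distinct edges at $v_i$ claim distinct spare resources in $B_i$) closes the gap. I would also double-check that this assignment introduces no envy between edge-agents and vertex-agents inside $B$-sets of vertices outside $S$: for $v_i \notin S$, $a_i$ sits on its top choice $b_i^1$ so cannot envy, and any edge-agent $a_k'$ not placed in $B_i$ but preferring some $b \in B_i$ over its own match would need $b$ to prefer $a_k'$ over $b$'s current occupant — but $b$'s occupant is either $a_i$ (ranked above all edge-agents) or another edge-agent in $\EE_i$, and a careful ordering-consistent assignment (respecting the fixed orderings on the sets) prevents such envy, which is the last routine check.
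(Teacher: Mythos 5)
Your reverse direction is correct and essentially the paper's own argument (the paper states it contrapositively, but the content — both endpoints in $S$ would force the edge-agent's match in $B_{j1}\cup B_{j2}$ to be envied by the corresponding vertex-agent, hence that edge-agent is unmatched and $|M|<m+n$ — is the same). The forward direction, however, has a genuine gap exactly at the envy-freeness verification, in two places. First, you match $T=\{a_i : v_i\in S\}$ to $X$ by an \emph{arbitrary} bijection and then assert that ``$a_i$ envies no one.'' That assertion only rules out envy toward occupants of $B_i$. Since every $x_j\in X$ has the same preference list $a_1,\ldots,a_n$ and every vertex-agent ranks $X$ in the same fixed order, an arbitrary bijection can make some $a_i\in T$ envy another agent of $T$ sitting on a higher-ranked resource of $X$ that prefers $a_i$; you must assign $T$ to $X$ in rank-consistent order (equivalently, take the stable matching of $T$ versus $X$, which is what the paper does via Gale--Shapley). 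Second, for the edge-agents you only gesture at ``a careful ordering-consistent assignment.'' The danger is concrete: if $e_j$ has both endpoints outside $S$ and you route $a_j'$ to its less-preferred set $B_{j2}$ while some resource of $B_{j1}$ is occupied by an edge-agent ranked below $a_j'$ in the fixed ordering of $\EE_{j1}$, then $a_j'$ envies that occupant; a similar problem arises inside a single $B_i$ if its spare resources are not filled in rank order. So the routing rule must actually be specified (for instance, send each edge-agent to the $B$-set of its most-preferred endpoint outside $S$ and fill each $B_i$ consistently with the fixed orderings); your capacity count $q_i=|E_i|+1$ then makes the placement feasible, but as written the envy-freeness of your constructed matching is not established.

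The paper closes both holes with a single device: after reserving the $B$-sets of $S$ to remain empty, it computes a stable matching of $T$ with $X$ and a stable matching of $T'\cup\{a_1',\ldots,a_m'\}$ with $\B\setminus(B'\cup X)$. Stability immediately yields envy-freeness within each block, the counting you make explicit guarantees every edge-agent is matched, and only the easy cross-block checks remain (empty $B'$ blocks envy from $T$ toward edge-agents; agents of $T'$ sit on resources they prefer to all of $X$). Your explicit greedy construction can be repaired to the same effect, but in its current form the forward direction is incomplete.
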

\begin{proof}
	First suppose that $S \subseteq V$ is an independent set of size $k$ in graph $G$. We construct a
	feasible, envy-free matching in $G'$ which matches all the agents in $\A$. Let $T$ be the set of vertex-agents corresponding to the vertices $v_i \in S$ i.e. $T = \{a_i \mid v_i \in S\}$. Match $T$ with $X$ using Gale and Shapley stable matching algorithm~\cite{GS62}. Hence, the matching is feasible.
	Let $T'$ be the set of agents corresponding to vertices $v_i$ such that $v_i \notin S$ i.e. $T' = \{a_1, a_2, \ldots, a_n\} \setminus T$. Let $B'$ be the set of resources appearing in sets $B_i$ such that $v_i \in S$ i.e. $B' = \bigcup\limits_{i: v_i \in S}{B_i}$. Match $T' \cup \{a_1', \ldots, a_m'\}$ with $\B \setminus (B' \cup X)$ using Gale and Shapley stable matching algorithm.

	We now prove that the matching is envy-free. No pair of agents in $T$ form an envy pair because we computed a stable matching between $T$ and $X$. No pair of agents in $T' \cup \{a_1', \ldots, a_m'\}$ form an envy pair because we computed a stable matching between this set and $\B \setminus (B' \cup X)$. Since, all resources in $B'$ are forced to remain empty, no agent in set $T$ can envy an agent in set $\{a_1', \ldots, a_m'\}$. An agent in $T'$ is matched to a higher-preferred resource than any resource in $X$, hence such agent cannot envy any agent in $T$. Thus, the matching is envy-free.

We now prove that the matching size is $m+n$. Every vertex-agent $a_i$ is matched either with some resource in $X$ or some resource in $B_i$. Since, $S$ is an independent set, at least one end point of every edge is not in $S$. So for every edge $e_t=(v_{t1}, v_{t2})$, there is at least one resource in sets $B_{t1}$, $B_{t2}$ that can get matched with the edge-agent $a_t'$ without causing envy. Thus, every edge-agent is also matched. Thus, we have an envy-free matching of size $m+n$.

For the other direction, let us assume that $G$ does not have an independent set of size $k$. 
	Suppose for contradiction that there exists a feasible,
envy-free matching $M$ of size $m+n$ in $G'$. Due to the unit lower-quota of every $x_i \in X$, exactly $k$ vertex-agents must be matched to resources in $X$. Let $S \subseteq V$ be the set of vertices $v_i$ such that the corresponding vertex-agent $a_i$ is matched to some resource in $X$ in $M$, i.e. $S = \{v_i \mid M(a_i) \in X\}$. So, $|S| = k$. Since, $S$ is not an independent set, there exists at least two vertex-agents $a_x$ and $a_y$ matched to some resource in $X$ such that the edge $e_j = (v_x, v_y) \in E$. Due to the preference lists of the resources, all the resources in both $B_x$ and $B_y$ sets must remain empty in $M$ to ensure envy-freeness. This implies that the edge-agent $a_j'$ must be unmatched in $M$. This implies that $|M| < m+n$, a contradiction. This completes the proof of the lemma.
\end{proof}

$\IS$ is $\NP$-hard even for $w$-regular graphs for $w \geq 3$~\cite{FGHSJ}.
As noted earlier, the reduced instance $G'$ has constant $t=w+1$ if $G$ is $w$-regular
for some constant $w$.
Thus, Lemma~\ref{lem:lem_is1} implies that $\MAXEFM$ is $\NP$-hard even when $t$ is constant,
thereby implying that $\MAXEFM$ is para-$\NP$-hard in $t$.
$\IS$ is $W[1]$-hard when the parameter is solution size~\cite{DF12}. 
Let $k'$ be one of the parameters $q, d, n_d$ 
and consider $\MAXEFM$ parameterized by $k'$.
For the $\MAXEFM$ instance $G'$ constructed above,
if we let $k' = k$, then Lemma~\ref{lem:lem_is1} 
implies that $\MAXEFM$ is $W[1]$-hard in the chosen parameter.
This establishes Theorem~\ref{thm:hardness}.

\subsection{Proof of Corollary~\ref{corr:rsmparanphard}}\label{apsec:rsmcorr}
We show that the corollary follows from the reduction from~\cite{DBLP:SAGT20} that shows
hardness of approximation of the $\MAXRSM$ problem.
We note that the reduced instance in Fig.~4~\cite{DBLP:SAGT20} has upper-quotas equal to $1$ (that is,
it is a $\ONEONELQ$ instance) and it admits a feasible, stable matching $M_s = \{(r_1^i,h_3^i), (r_2^i,h_2^i) \mid v_i \text{ is a vertex in } G\}$. Thus, $d = n_d = 0$.
If $k'$ is either $d$ or $n_d$ and $G'$ is the reduced instance of $\MAXRSM$
	constructed in~\cite{DBLP:SAGT20}, then
Theorem~4~\cite{DBLP:SAGT20} implies that $\MAXRSM$ is $\NP$-hard when the chosen parameter is constant.
This implies that $\MAXRSM$ is para-$\NP$-hard when parameterized by $d$ or $n_d$ even for $\ONEONELQ$ instances.

\section{Concluding Remarks}\label{sec:disc}
An instance with lower-quotas may not admit a stable, feasible matching and therefore
a relaxation of stability is investigated.
Various optimality notions for lower-quota setting are studied~\cite{HIM16, Yokoi20, FITUY15, NN17} and
parameterized complexity for the notion studied in~\cite{HIM16} is investigated in~\cite{DBLP:journals/algorithmica/MnichS20}.
In this work, we investigate the notion of envy-freeness and relaxed stability
in presence of lower-quotas.
On one hand, envy-freeness is a natural relaxation of stability that respects
the preferences of agents and resources to the {\em best possible extent} but an instance may not
admit a feasible, envy-free matching.
On the other hand, an instance always admits a feasible, relaxed stable matching.
Krishnaa~et~al.~\cite{DBLP:SAGT20} show that computing a maximum size feasible,
envy-free or relaxed stable matching is computationally hard.
Prior to this work, no results about parameterized complexity of computing a largest feasible envy-free or
relaxed stable matching were known.

We present our kernelization and $\FPT$ results for $\ONEONELQ$ instances.
These results are theoretically challenging as well as practically appealing.
A standard technique of {\em cloning}~\cite{GI89, DBLP:journals/algorithmica/MnichS20} 
can be used to efficiently convert a $\MANYONELQ$ instance to a $\ONEONELQ$.
We observe that the parameters $s, |\overline{\A}|, \ell_{LQ}$
remain the same after cloning.
Thus, our kernelization for $\MAXRSM$ (in Theorem~\ref{thm:kernel}) and
$\FPT$ result for the parameter $|\overline{\A}|$ (in Theorem~\ref{thm:fpt})
are also applicable for the $\MANYONELQ$ instances.
We also remark that algorithm $\ALG$ can be appropriately modified
to handle arbitrary quotas in the $\MANYONELQ$ instances. 
Using the modified $\ALG$ algorithm, $\alg_{efm}$ 
can be applied to $\MANYONELQ$ instances {\em without} using the
cloning technique for the parameters $(q, \ell_{LQ})$.
In that case, the running time of $\alg_{efm}$ changes to $O(m\cdot 2^{q\ell_{LQ}})$.
Thus, the new algorithm remains $\FPT$.

\vspace{0.2cm}
\noindent{\bf Acknowledgements. } The author thanks anonymous reviewers for 
their comments that provided technical insights and significantly improved the presentation of the paper.

\bibliographystyle{splncs04}
\bibliography{refs}

\begin{thebibliography}{10}
\providecommand{\url}[1]{\texttt{#1}}
\providecommand{\urlprefix}{URL }
\providecommand{\doi}[1]{https://doi.org/#1}

\bibitem{AGRSZ18}
Adil, D., Gupta, S., Roy, S., Saurabh, S., Zehavi, M.: Parameterized algorithms
  for stable matching with ties and incomplete lists. Theor. Comput. Sci.
  \textbf{723},  1--10 (2018)

\bibitem{BFIM10}
Bir{\'{o}}, P., Fleiner, T., Irving, R.W., Manlove, D.: The college admissions
  problem with lower and common quotas. Theoretical Computer Science
  \textbf{411}(34-36),  3136--3153 (2010)

\bibitem{DBLP:conf/wine/BoehmerH20}
Boehmer, N., Heeger, K.: A fine-grained view on stable many-to-one matching
  problems with lower and upper quotas. In: Web and Internet Economics - 16th
  International Conference, {WINE} 2020, Proceedings. vol. 12495, pp. 31--44.
  Springer (2020). \doi{10.1007/978-3-030-64946-3\_3}

\bibitem{chen_et_al:LIPIcs:2018:9039}
Chen, J., Hermelin, D., Sorge, M., Yedidsion, H.: {How Hard Is It to Satisfy
  (Almost) All Roommatesl}. In: 45th International Colloquium on Automata,
  Languages, and Programming (ICALP 2018). vol.~107, pp. 35:1--35:15 (2018).
  \doi{10.4230/LIPIcs.ICALP.2018.35}

\bibitem{DF12}
Downey, R.G., Fellows, M.R.: Parameterized Complexity. Springer Publishing
  Company, Incorporated (2012)

\bibitem{FITUY15}
Fragiadakis, D., Iwasaki, A., Troyan, P., Ueda, S., Yokoo, M.: Strategyproof
  matching with minimum quotas. {ACM} Trans. Economics and Comput.
  \textbf{4}(1),  6:1--6:40 (2015). \doi{10.1145/2841226}

\bibitem{FGHSJ}
Fricke, G., Hedetniemi, S., Jacobs, D.: Independence and irredundance in
  k-regular graphs. Ars Comb.  \textbf{49},  271--279 (08 1998)

\bibitem{GS62}
Gale, D., Shapley, L.S.: College admissions and the stability of marriage. The
  American Mathematical Monthly  \textbf{69}(1),  9--15 (1962),
  \url{http://www.jstor.org/stable/2312726}

\bibitem{DBLP:conf/fsttcs/Gupta0R0Z20}
Gupta, S., Jain, P., Roy, S., Saurabh, S., Zehavi, M.: On the (parameterized)
  complexity of almost stable marriage. In: 40th {IARCS} Annual Conference on
  Foundations of Software Technology and Theoretical Computer Science, {FSTTCS}
  2020. pp. 24:1--24:17 (2020). \doi{10.4230/LIPIcs.FSTTCS.2020.24}

\bibitem{GI89}
Gusfield, D., Irving, R.W.: The Stable Marriage Problem: Structure and
  Algorithms. MIT Press, Cambridge, MA, USA (1989)

\bibitem{HIM16}
Hamada, K., Iwama, K., Miyazaki, S.: The hospitals/residents problem with lower
  quotas. Algorithmica  \textbf{74}(1),  440--465 (2016).
  \doi{10.1007/s00453-014-9951-z}

\bibitem{Irving85}
Irving, R.W.: An efficient algorithm for the "stable roommates" problem. J.
  Algorithms  \textbf{6}(4),  577--595 (1985).
  \doi{10.1016/0196-6774(85)90033-1}

\bibitem{DBLP:SAGT20}
Krishnaa, P., Limaye, G., Nasre, M., Nimbhorkar, P.: Envy-freeness and relaxed
  stability: Hardness and approximation algorithms. In: Algorithmic Game Theory
  - 13th International Symposium, {SAGT} 2020. vol. 12283, pp. 193--208.
  Springer (2020). \doi{10.1007/978-3-030-57980-7\_13}

\bibitem{maxefm_arxiv}
Krishnaa, P., Limaye, G., Nasre, M., Nimbhorkar, P.: Envy-freeness and relaxed
  stability under lower quotas (2020), \url{http://arxiv.org/abs/1910.07159}

\bibitem{MNNR18}
Krishnapriya, A.M., Nasre, M., Nimbhorkar, P., Rawat, A.: How good are popular
  matchings? In: 17th International Symposium on Experimental Algorithms, {SEA}
  2018. pp. 9:1--9:14 (2018). \doi{10.4230/LIPIcs.SEA.2018.9}

\bibitem{DBLP:journals/algorithmica/MarxS10}
Marx, D., Schlotter, I.: Parameterized complexity and local search approaches
  for the stable marriage problem with ties. Algorithmica  \textbf{58}(1),
  170--187 (2010). \doi{10.1007/s00453-009-9326-z}

\bibitem{MARX201125}
Marx, D., Schlotter, I.: Stable assignment with couples: Parameterized
  complexity and local search. Discrete Optimization  \textbf{8}(1),  25--40
  (2011). \doi{https://doi.org/10.1016/j.disopt.2010.07.004}, parameterized
  Complexity of Discrete Optimization

\bibitem{DBLP:journals/algorithmica/MnichS20}
Mnich, M., Schlotter, I.: Stable matchings with covering constraints: {A}
  complete computational trichotomy. Algorithmica  \textbf{82}(5),  1136--1188
  (2020). \doi{10.1007/s00453-019-00636-y}

\bibitem{NN17}
Nasre, M., Nimbhorkar, P.: Popular matchings with lower quotas. In: 37th
  {IARCS} Annual Conference on Foundations of Software Technology and
  Theoretical Computer Science, {FSTTCS} 2017. pp. 44:1--44:15 (2017).
  \doi{10.4230/LIPIcs.FSTTCS.2017.44}

\bibitem{Roth86}
Roth, A.E.: On the allocation of residents to rural hospitals: A general
  property of two-sided matching markets. Econometrica  \textbf{54}(2),
  425--427 (1986)

\bibitem{WR18}
Wu, Q., Roth, A.E.: The lattice of envy-free matchings. Games and Economic
  Behavior  \textbf{109},  201--211 (2018)

\bibitem{Yokoi20}
Yokoi, Y.: Envy-free matchings with lower quotas. Algorithmica  \textbf{82}(2),
   188--211 (2020). \doi{10.1007/s00453-018-0493-7}

\end{thebibliography}

\appendix
\end{document}